\newtheorem{remark}{\bf Remark}
\def\phi{\varphi}
\def\({\left(}
\def\){\right)}
\def\bff{{\mathbf{f}}}
\def\bg{{\mathbf{g}}}
\def\br{{\mathbf{r}}}
\def\bs{{\mathbf{s}}}
\def\bt{{\mathbf{t}}}
\def\b0{{\mathbf{0}}}
\def\bA{{\mathbf{A}}}
\def\bS{{\mathbf{S}}}
\newcounter{protocol}
\newtheorem{Lemma}{Lemma}
\newtheorem{Proposition}{Proposition}
\title{Over-the-Air Fusion of Sparse Spatial Features for Integrated Sensing and Edge AI over Broadband Channels}
\author{Zhiyan Liu, Qiao Lan, and Kaibin Huang
\thanks{
Manuscript received 26 April 2024; revised 10 September 2024; accepted 24 December 2024. The work described in this paper was supported in part by the Research Grants Council of the Hong Kong Special Administrative Region, China under a fellowship award (HKU RFS2122-7S04), the Areas of Excellence scheme grant (AoE/E-601/22-R), Collaborative Research Fund (C1009-22G), and the Grant 17212423. Part of the described research work is conducted in the JC STEM Lab of Robotics for Soft Materials funded by The Hong Kong Jockey Club Charities Trust. An earlier version of this paper was presented in part at the IEEE International Conference on Communications Workshops (ICC Workshops), Denver, CO, USA, June 9--13, 2024 [DOI: 10.1109/ICCWorkshops59551.2024.10615471].  The associate editor coordinating the review of this article and approving it for publication was Y. Zhang.
\emph{(Corresponding author: Kaibin Huang.)} 

Z. Liu, Q. Lan and K. Huang are with Department of Electrical and Electronic Engineering at The University of Hong Kong, Hong Kong (e-mail: zyliu@eee.hku.hk; qlan@eee.hku.hk; huangkb@eee.hku.hk).  }}
\newcommand{\removelatexerror}{\let\@latex@error\@gobble}
\begin{document}
\maketitle
\begin{abstract}
The \emph{sixth-generation} (6G) mobile networks feature two new usage scenarios -- distributed sensing and edge \emph{artificial intelligence} (AI). 
Their natural integration, termed \emph{integrated sensing and edge AI} (ISEA), promises to create a platform that enables intelligent environment perception for wide-ranging applications. 
A basic operation in ISEA is for a fusion center to acquire and fuse features of spatial sensing data distributed at many edge 
devices (known as agents), which is confronted by a communication bottleneck due to multiple access over hostile wireless channels. 
To address this issue, we propose a novel framework, called \emph{Spatial Over-the-Air Fusion} (Spatial AirFusion), which exploits radio waveform superposition to aggregate spatially sparse features over the air and thereby enables simultaneous access. 
The framework supports simultaneous aggregation over multiple voxels, which partition the 3D sensing region, and across multiple subcarriers. 
It exploits both spatial feature sparsity with channel diversity to pair voxel-level aggregation tasks and subcarriers to maximize the minimum receive signal-to-noise ratio among voxels. 
Optimally solving the resultant mixed-integer problem of \emph{Voxel-Carrier Pairing and Power Allocation} (VoCa-PPA) is a focus of this work. 
The proposed approach hinges on derivations of optimal power allocation as a closed-form function of voxel-carrier pairing and a useful property of VoCa-PPA that allows dramatic solution space reduction. 
Both a low-complexity greedy algorithm and an optimal tree-search algorithm are then designed for VoCa-PPA. 
The latter is accelerated with a customised compact search tree, node pruning and agent ordering.
Extensive simulations using real datasets demonstrate that Spatial AirFusion significantly reduces computation errors and improves sensing accuracy compared with conventional over-the-air computation without awareness of spatial sparsity. 

\end{abstract}

\begin{IEEEkeywords}
Edge AI, distributed sensing, multiple access, over-the-air computation.
\end{IEEEkeywords}

\vspace{-2mm}
\section{Introduction}
\label{sec: introduction}
The \emph{sixth-generation} (6G) mobile network warrants two essential capabilities, sensing and \emph{artificial intelligence} (AI)~\cite{ITUR2023}.  
The first capability involves the integration of diversified sensing modalities such as camera, mmWave, and \emph{Light Detection and Ranging} (LiDAR) sensors to collect information from sensory data. 
The second capability is envisioned to support AI model deployments in 6G edge networks, enabling the delivery of intelligent services. 
Integrating these two essentials for advanced 6G applications, ranging from high-precision perception to human-machine symbiosis, leads to an emerging paradigm called \emph{Integrated Sensing and Edge AI} (ISEA)~\cite{Chen2023arxiv}.
In such a system, an edge device equipped with sensors, termed an \emph{agent}, in a distributed sensing system acquires sensory data from its surroundings and sends features extracted using its local perception model to the edge server (i.e., fusion center) for aggregation and then inference {to support intelligent decisions and real-time actions} for a downstream AI application\cite{Huang2023TWC,Dingzhu24TWC}. 
However, ISEA faces a communication bottleneck due to the aggregation of high-dimensional sensing features over resource-constrained wireless channels~\cite{Wu2023Network,Shi2022JSAC}. 
One promising solution for overcoming the bottleneck is called \emph{Over-the-Air Computation} (AirComp), which exploits waveform superposition in simultaneous access to realize over-the-air data aggregation\cite{GX2021WCM,GX2020TWC,Deniz2020TWC,CMZ2021JSAC}. 
Based on AirComp, we develop a novel framework, termed \emph{Spatial Over-the-Air Fusion} (Spatial AirFusion), for communication-efficient multi-sensor fusion in environment perception over a broadband channel. 
Its distinctive feature, { differentiating it from conventional AirComp,} is to exploit spatial feature sparsity and channel frequency selectivity to intelligently map voxels, which divide the sensing region, to subcarriers for performing voxel-level AirComp tasks.
Thereby, the sensing performance is improved while computation complexity reduced. 

Precise environment perception underpins a set of killer application scenarios of 6G, e.g., autonomous driving and collaborative robots. 
State-of-the-art perception models~\cite{Zhou2018CVPR} leverage LiDAR, mmWave, and camera data to generate spatial feature vectors associated with certain locations in the physical world, as opposed to location-agnostic features in conventional classification and object detection. 
This type of feature is known as \emph{voxel features}, where one voxel represents a spatial region in an evenly spaced 3D grid of the sensing range\cite{Rukhovich2022WACV,Xie2022arxiv}. 
To support low-latency and large-scale environment perception in 6G networks requires task-oriented air-interface design targeting ISEA. 
As a specific use case of edge inference, ISEA can be implemented on the well-known split inference architecture\cite{Lan2023TWC,Shao2023TWC,Zhang2020CM,Zhou2020IoTJ}. 
In this architecture, a global inference model is split into a device and a server sub-model with the former used for local feature extraction and the latter for remote inference\cite{Zhang2020CM}. 
It can be generalized to distributed split inference (for which ISEA is a special case) by deploying models at multiple devices for local feature extraction (or inference) and performing local-feature (or label) aggregation at the server to attain a high inference accuracy\cite{Shao2023TWC,Deniz2022ISIT,Wen2023TWC}. 
{ For communication-efficient feature aggregation, an AirComp-based general framework is proposed in \cite{Huang2023TWC} to realize different feature-aggregation functions, which include maximization,  in an ISEA system based on an end-to-end sensing performance metric. As a simultaneous-access technology,  AirComp promises to solve the scalability issues in ISEA, enabling low-latency device cooperation, which motivates us to further investigate AirComp for ISEA-based cooperative perception. }

AirComp in its own right is a fast-growing area\cite{GX2021WCM}.
The principle of AirComp is to exploit the superposition of signals simultaneously transmitted by multiple agents such that the desired aggregation functions, e.g., averaging, multiplication, and maximization, can be realized over the air\cite{Gastpar2011TIT,Katabe2016arxiv}. 
To materialize accurate functional computation via AirComp requires coping with channel fading and noise.
For this purpose, a line of techniques has been designed to minimize AirComp errors including optimal power control\cite{Xiaowen2020TWC}, \emph{multiple-input-multiple-output} (MIMO) beamforming\cite{GX2019IOTJ} and interference management\cite{Xiaowen2021TWC}. 
Broadband transmission is prevalent in modern high-rate mobile systems, which is assumed in the current system model. 
This motivates researchers to study broadband AirComp by addressing issues such as power allocation among subcarriers\cite{Wen2023TWC,Bennis2021Globecom}, subcarrier truncation to avoid deep fading\cite{GX2020TWC} and exploitation of channel frequency diversity~\cite{Liu2021WCL}. 
The co-existing information-transfer users and AirComp devices participating in federated learning are also studied where the rate-maximizing subcarrier allocation for the former is designed subject to a guarantee on the learning performance of the latter~\cite{Zhang2023TWC}. 
{ Going beyond computation of generic aggregation functions, AirComp can be applied and tailored for specific AI computation tasks. This idea of task-oriented AirComp design originated in AirComp applications in \emph{federated learning} (FL), which created an area called \emph{over-the-air FL} (AirFL)\cite{GX2020TWC,Deniz2020TWC,CMZ2021JSAC}. }
In this paradigm, AirComp realizes over-the-air aggregation of local gradients or models uploaded by  devices, from which the result is used to update a global model at an edge server\cite{GX2020TWC,Deniz2020TWC}.
While traditional AirComp techniques aim at computation error minimization, the design objective of AirFL techniques is to accelerate learning and account for the specific characteristics of transmitted data (i.e., local gradients/models). 
This results in a rich set of task-oriented wireless techniques such as power control based on gradient statistics\cite{Tao2021TWC}, data- and channel-aware sensor scheduling\cite{ZJun2024TWC}, adaptive precoding\cite{Eldar2021TSP}, etc. 

Existing studies on AirComp as discussed above all assume single-stream data sources without considering data spatial distributions. 
Nevertheless, spatial feature variation is a key characteristic of environment perception as reflected in two aspects.  
{First, features are sparsely distributed in the voxel dimension. At the outputs of prevalent sensing models (e.g., VoxelNet\cite{Zhou2018CVPR} and PointPillars\cite{pointpillars}), features for a given voxel are non-zero only if the voxel contains detectable objects in the physical world (e.g., vehicles and pedestrians). Consequently,  only a small portion of all voxels are nonzero due to finite sensing ranges, view occlusion, and sparse scattering of  objects in  space. For example, both\cite{Zhou2018CVPR} and\cite{pointpillars} report over 90\% empty voxels.
Second, }spatial feature distributions as observed by different agents are heterogeneous because of their non-identical fields of perception and view angles.  
Another aspect of heterogeneity is { multiuser and frequency diversities of wireless channels. }
One key effect of spatial feature variation is the spatial variation of AirComp error as elaborated in the sequel. 
Let the task of spatial feature aggregation be divided into voxel-level sub-tasks. 
Due to the sparsity and heterogeneity of spatial feature distributions, the subset of agents participating in aggregation varies from voxel to voxel. {This results in different AirComp errors for different voxels as the errors }depend on the numbers of participating agents (see, e.g.,\cite{Vucetic2020TWC}) and qualities of the associated channels. 
The errors can be manipulated using a mechanism called \emph{Voxel-Carrier (VoCa) Pairing} that maps voxels to subcarriers for executing their sub-tasks. 
Via this mechanism, a large number of degrees-of-freedom due to numerous voxels and subcarriers can be exploited to improve the performance of Spatial AirFusion. 
Furthermore, VoCa Pairing can be integrated with power allocation over subcarriers to obtain additional performance gain, giving rise to the problem of optimal \emph{VoCa Pairing and Power Allocation} (VoCa-PPA).

Let the performance of Spatial AirFusion be measured using the metric of the minimum receive SNR among all voxels, which serves as an indicator of the largest AirComp error. 
Given the objective of maximizing the metric, a subcarrier under favourable channel conditions should be ideally paired with a voxel with many participating agents. 
However, given the heterogeneity in multiple voxels and sub-channels of multiple agents, the optimal VoCa-PPA problem becomes a sophisticated mixed integer program. 
In this work, we present the framework that consists a set of algorithms for efficiently solving the problem via exploiting the unique features of Spatial AirFusion. 
The key contributions are summarized as follows. 
\begin{itemize}
    \item \textbf{AirFusion Protocol.} A communication protocol is presented to realize spatial AirFusion in a multi-agent system, comprising the following three phases. 
    First, each agent sends binary sparsity indicators of all voxels in the sensing region to the server. 
    In the second phase of radio resource allocation, the server performs VoCa-PPA using one of the proposed algorithms based on the sparsity indicators and broadband channel states. 
    Last, in the over-the-air fusion phase, the agents’ feature vectors on voxels are transmitted simultaneously and aggregated over the air using the assigned subcarriers and power. 
    \item \textbf{Greedy VoCa-PPA Algorithm. } A low-complexity algorithm is designed to compute a sub-optimal solution for the VoCa-PPA problem by sequentially solving the problems of optimal power allocation and VoCa Pairing. 
    First, given VoCa Pairing, the optimal allocated power for subcarriers is derived in closed-form. As revealed by the result, the minimum receive SNR  depends solely on a bottleneck agent charaterized by poorest associated channels. 
    Second, given the derived power allocation, the VoCa-PPA problem is reduced to the problem of optimal VoCa Pairing, which is combinatorial and NP-hard\cite{ehrgott2006discussion}. 
    It is solved using a low-complexity greedy algorithm that iteratively matches each voxel with the best-matched subcarrier under the criterion of minimizing the maximum channel-inversion power over all participating agents. 
    In this regard, voxels with relatively high feature densities tend to involve more agents participating in AirComp, which degrades receive SNRs. For this reason, they are given higher priorities so as to be matched to better sub-channels. 
    \item \textbf{Optimal VoCa-PPA Algorithms. } Leveraging the optimal power allocation derived previously simplifies the optimal VoCa-PPA problem to optimal VoCa Pairing without sacrificing the solution's optimality.  
    Despite a simpler form, the latter is a max-linear assignment problem that does not admit polynomial-time solutions. 
    To address the issue, a solution approach is designed to significantly reduce the computation complexity. 
    The approach is comprised of two designs - a compact search tree and a \emph{depth-first search} (DFS) algorithm that are both customised for VoCa Pairing. 
    Underpinning these algorithms is a useful property of the problem that two voxels with identical sparsity indicators are equivalent from the perspective of minimizing the objective. 
    The property is exploited to convert the VoCa Pairing problem from the original \emph{one-to-one mapping} to \emph{subset-to-subset mapping}. 
    As a result, orders-of-magnitude reduction in computation complexity is achievable. 
    The complexity of tree search is further reduced using two proposed schemes.  
    The first is intelligent early stopping and node pruning based on criteria developed by comparing the current best global objective and local objectives in each step. 
    The other is agent ordering in DFS based on a designed priority indicator combining each agent's channel states and sparsity pattern.
    \item \textbf{Experiments.} The performance of Spatial AirFusion is evaluated by extensive experiments using both synthetic and real datasets (i.e., OPV2V\cite{xu2022opencood}). { The benchmarking schemes include 1) \emph{naive AirComp} which schedules all sensors for all voxels without sparsity awareness; 2) \emph{AirFusion-Vanilla} which adopts the sparsity-aware framework but randomly pairs voxels with subcarriers; 3) \emph{digital air interface} where devices transmit orthogonally. }
    The proposed framework is demonstrated to outperform naive AirComp by a large margin, e.g., 10 dB gain in AirComp error suppression and significantly improved end-to-end inference accuracy. Compared with digital air interface, AirFusion achieves $5.74$ times reduction in communication latency with the same inference accuracy. 
\end{itemize}

\section{System Models}
\label{sec: models-and-metrics}

We consider an ISEA system targeting environment perception, where $K$ agents are distributed in the space and cooperate to complete a sensing task as coordinated by a \emph{fusion center} (FC). The system is illustrated in Fig.~\ref{fig_system} for the context of autonomous-driving perception where agents are helper vehicles and the fusion center is an ego vehicle. For each perception instance, each agent acquires a view (e.g., a LiDAR frame) of the surrounding environment via its sensor and extracts its local features. The fusion center then employs an AirFusion technique as proposed in subsequent sections to wirelessly aggregate local features and perform inference for the global perception results. Relevant models and the performance metric are described in the following subsections. 

\begin{figure*}[t]
\centering
\subfigure[]{\includegraphics[width=0.61\textwidth]{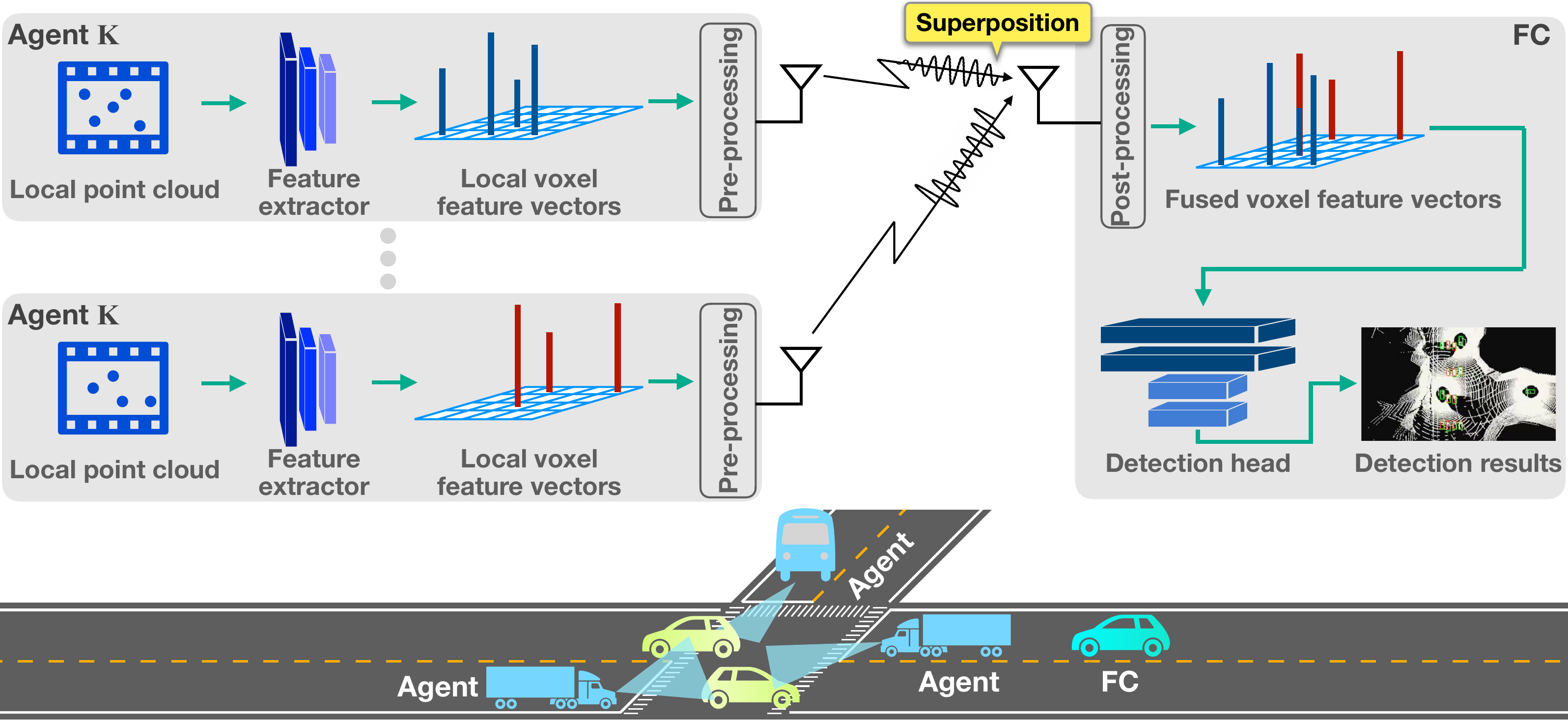}\label{fig_system}}
\hspace{0.5cm}
\hspace{-4mm}
\subfigure[]{\includegraphics[width=0.32\textwidth]{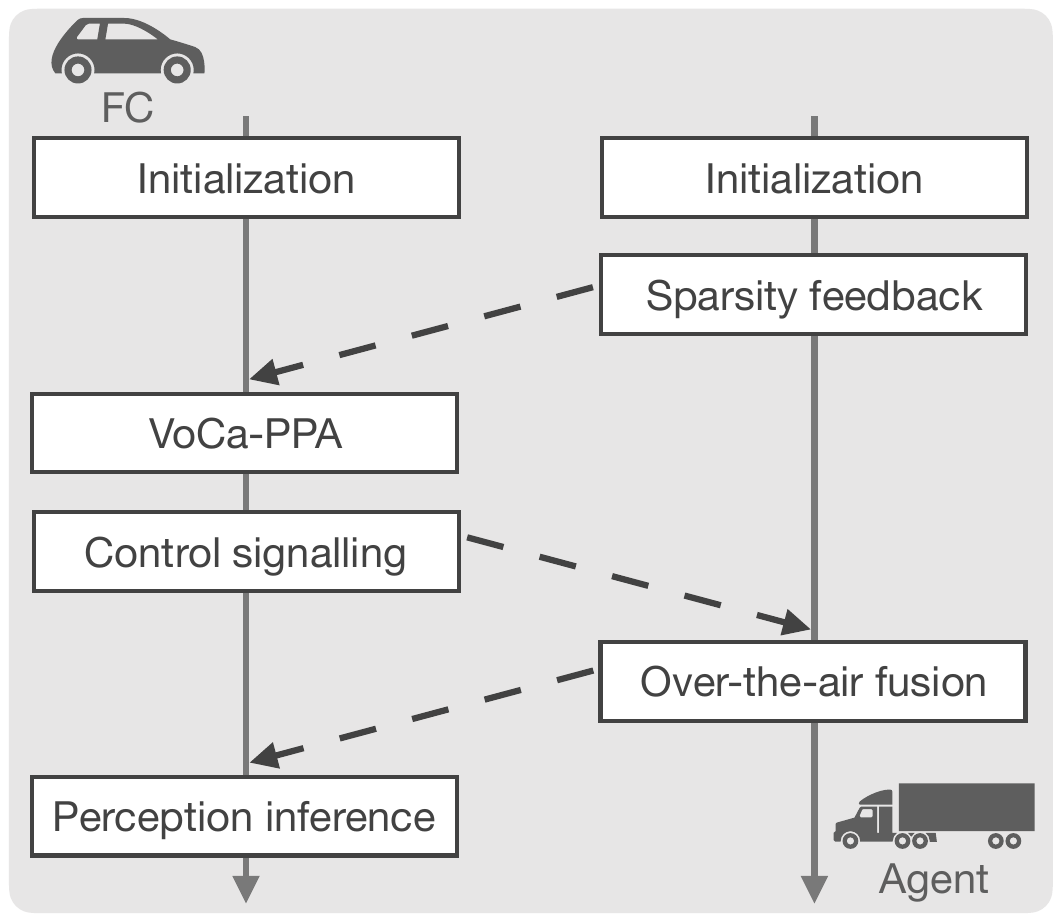}\label{fig_protocol}}
\vspace{-2mm}
\caption{(a) An ISEA system for environment perception in the context of autonomous driving. (b) Spatial AirFusion protocol.}
\label{fig_sysNprotocol}
\end{figure*}

\subsection{Agent Perception Model}
\label{subsec: agent_perception_model}

{ Each agent is equipped with a LiDAR or camera sensor that has its own perception range and perspective. Prior to fusion, each agent calibrates timestamp differences and performs local perspective transformation to project its view onto the FC's coordinates based on the relative position and speed using existing techniques such as coordinate offsets \cite{coordinate_offset} and AVR\cite{avr}. We thus assume a shared three-dimensional coordinate for all sensors, which is by convention partitioned into a regular grid with each cell referred to as a \emph{voxel}.} The numbers of partitions  along the depth, height, and width directions are denoted as $V_{\sf d}$, $V_{\sf h}$, and $V_{\sf w}$, respectively. Then the total number of voxels is given as $V = V_{\sf d} V_{\sf h} V_{\sf w} $, and the voxels are indexed by  $v=1, 2, ..., V$. As illustrated in Fig.~\ref{fig_system}, each agent utilizes its voxel-perception model to generate an $L$-dimensional feature vector for every voxel to capture the spatial object information contained within the voxel, termed  \emph{voxel feature vector}~\cite{Rukhovich2022WACV, Zhou2018CVPR}. For voxel $v$, its feature vector on agent $k$ is denoted as $\mathbf{f}_{k,v}\in \mathbb{R}^L$. {It can be a zero vector (i.e., $\mathbf{f}_{k,v}=\mathbf{0}$) if voxel $v$ is outside the perception range of agent or voxel $v$ is in the perception range but no objects are detected in voxel $v$ by agent $k$. Even in the latter case, the detection result may be false negative due to occlusion, noises or hardware imperfections of the agent's sensor. }

\subsection{Cooperative Sensing Model}
\label{subsec: wireless-cooperative-sensing-model}
The agents upload their voxel feature vectors, $\{\mathbf{f}_{k,v}\}_{1\leq k\leq K, 1 \leq v \leq V}$, to the fusion center over wireless links. Considering an arbitrary voxel, say voxel $v$, the result from fusing the associated  vectors is denoted as $\mathbf{g}_{v}$. {For two representative fusion functions, namely average-pooling and max-pooling, the $\ell$-th element of $\mathbf{g}_{v}$ is given as}
\begin{equation}
    \label{eqn: fusion-ground-truth}
    {g}_{v}[\ell] =\begin{cases}
        \frac{1}{K}\sum_{k=1}^{K} {f}_{k,v}[\ell], & \text{average pooling},\\
        \max_{1\leq k \leq K} {f}_{k,v}[\ell], &\text{max-pooling}.
    \end{cases} 
\end{equation}
Finally, the fusion center feeds the fused feature vectors, $\{\mathbf{g}_v\}_{v=1}^{V}$, into its perception model to obtain the perception results (e.g., object label).
{
\begin{remark}
    (Supported Aggregation Functions) \emph{In this paper, we have considered feature averaging or maximizing as the aggregation function, which can cover a majority of multi-sensor fusion schemes in cooperative perception for autonomous driving by up to a linear scaling at each sensor. For example, elementwise averaging/maximum is considered in F-Cooper and V2VNet, while weighted-sum fusion based on attention scores is adopted in Where2comm, BEVFormer and ActFormer (see the survey\cite{fusion_survey}). }
    \emph{Our framework is extensible to many other types of fusion functions. If the function belongs to the family of \emph{nomographic functions}, which includes, for example, square-root pooling and geometric mean, the extension is achieved by applying corresponding data pre- and post-processing at sensors and servers, respectively. For non-nomographic functions, approximations with nomographic functions can be designed by existing methods, e.g., \cite{nomographic_approx}. 
    }
\end{remark}
}

\subsection{Communication Model}
Spatial AirFusion wirelessly implements the above feature-fusion process over a broadband channel, which is modeled as follows.  The total bandwidth $B$ is partitioned into $M$ subcarriers using \emph{orthogonal frequency division multiplexing} (OFDM).  Without loss of generality, it is assumed that $M\geq V$, as otherwise transmission for all $V$ voxels can be carried out over multiple channel coherence blocks. The channel follows block fading where a subcarrier remains constant within a channel coherence block.  It is assumed that the channel coherence time is not shorter than the duration of $L$ symbol slots and that \emph{channel state information} (CSI) is available at the receiver and transmitters\footnote{As a common assumption in existing broadband AirComp literature (see, e.g., \cite{Wen2023TWC,Zhang2023TWC}), we assume reliable acquisition of CSI through downlink pilots and channel feedback via existing schemes such as frequency-domain interpolation\cite{channel_est_ofdm} and limited feedback\cite{limited_feedback}. While dedicated pilot design and feedback schemes (see, e.g., \cite{gu2023twc}) can further mitigate the overhead, relevant discussions are beyond the scope of this paper. }. This allows a voxel feature vector to be uploaded within a single channel coherence block.   
Assuming symbol-level synchronization (see \cite{GX2020TWC} for synchronization techniques), all agents simultaneously transmit their feature vectors on assigned subcarriers.  The $\ell$-th symbol received by the fusion center on the $m$-th subcarrier, $y_{m}[\ell]$, is  given by
\begin{equation}
    \label{eqn: channel_model}
    y_{m}[\ell]=\sum_{k=1}^K h_{k,m}p_{k,m}[\ell]x_{k,m}[\ell] + z_{m}[\ell], 
\end{equation}
where $x_{k,m}[\ell]$ denotes the $\ell$-th symbol transmitted by the $k$-th agent on the $m$-th subcarrier, $h_{k,m}$ the complex channel coefficient of subcarrier $m$ from  agent $k$ to the fusion center, $p_{k,m}[\ell]$ the precoding coefficient, and $z_{m}[\ell] \sim \mathcal{CN}(0,N_{0})$ the i.i.d. Gaussian noise with power $N_{0}$. Using training data, the symbols $\{x_{k,m}[\ell]\}$ can be normalized to be zero-mean and unit-variance on a long-term basis~\cite{Huang2023TWC}. 
Channel inversion precoding is adopted for magnitude alignment between received signals~\cite{Deniz2020TWC,Shi2020TWC}. The transmit power of agent $k$ on subcarrier $m$ is then given by $|p_{k,m}[\ell]|^2=\frac{P_{\mathsf{rx},m}[\ell]}{|h_{k,m}|^2},$ $\forall l$, where $P_{\mathsf{rx},m}[\ell]\geq 0$ denotes the receive SNR coordinated by the fusion center for the $\ell$-th symbol transmitted on subcarrier $m$. 
As the channel remains constant for all $\ell=1,2,\ldots,L$, we set $P_{\mathsf{rx},m}[\ell]\triangleq P_{\mathsf{rx},m}, \forall \ell$, and consequently $p_{k,m}[\ell]\triangleq p_{k,m}, \forall \ell$. 
Each agent limits the total transmission power per OFDM symbol to $P_{\sf max}$, which is given as
\begin{equation}
    \label{eqn: power-constraints}
    \sum_{m=1}^{M} \vert p_{k,m} \vert^2 \leq P_{\sf max}, \ \forall k.
\end{equation}

\subsection{Performance Metric}
The presence of channel distortion in Spatial AirFusion results in AirComp error, defined as the mean square error between the over-the-air aggregated data and the ground-truth fusion result\cite{GX2021WCM}. Under per-agent power constraints, AirComp error, known to be inversely proportional to the receive SNR, is dominated by the worst channel due to the required magnitude alignment of received signals via channel inversion\cite{Xiaowen2020TWC}. In the sensing context, the end-to-end sensing accuracy, prone to distortion in the aggregated intermediate features, has been shown to improve with the receive SNR in \cite{Huang2023TWC}. In AirFusion, we denote the receive SNR for the sub-task of aggregating voxel $v$'s features as $\gamma_v${, which controls the aggregation quality of $\bg_v$.} It is determined by the coordinated SNR level for its assigned subcarrier, i.e., $\gamma_v=P_{\mathsf{rx},m(v)}$ if subcarrier $m(v)$ is assigned for voxel $v$. { The performance metric shall thus be a function of the received SNR levels $\{\gamma_v\}_{v=1}^V$. To determine its form for sensing performance maximization requires a closer look into the downstream \emph{region proposal network} (RPN) \cite{rpn} for object detection tasks. An important property in RPN inference is its \emph{locality}. Specifically, RPN slides a small neural network over the aggregated feature map, which takes in a small spatial window of voxel features and outputs the detection results for the associated voxel. Mathematically, the detection result for voxel $v$, $\br_v$, is given by $\br_v=\text{RPN}\left(\{\bg_v\}_{v\in\mathcal{N}(v)}\right)$, where $\mathcal{N}(v)$ denotes the spatially neighboring voxels of voxel $v$. Then the object detection results for the entire space is obtained by sliding over all $v\in V$. We can see that the detection results for a given voxel $v$ only rely on a small subset of voxel features with spatial locality. In mission-critical tasks where misdetection in a single voxel can be catastrophic (e.g., missing a pedestrian in autonomous driving), reliable detection is demanded for all voxels. Hence, it important to rein in the feature distortion by improving $\gamma_v$ for \emph{every} voxel instead of simply controlling average distortion. We therefore propose to define the performance metric for Spatial AirFusion, denoted by $U$, as the minimum receive SNR across all voxels: $U=\min\limits_{v\in\{1,\ldots,V\}} \gamma_v$.}

\section{Spatial AirFusion Protocol and Operations}

The proposed Spatial AirFusion framework aims at efficiently aggregating multi-agent voxel features over a broadband channel, where the feature vectors on different agents but attributed to the same voxel are aggregated over a particular subcarrier. Targeting environment perception, Spatial AirFusion is differentiated from generic AirComp in that features exhibit heterogeneous sparsity across voxels due to diversified occlusion and finite detection ranges of agents, which is exploited for optimized resource allocation by VoCa Pairing and power control. The steps of the Spatial AirFusion protocol are illustrated in Fig.~\ref{fig_protocol} and detailed below.

\subsection{Sparsity Feedback} 
Assume that agents are synchronized in indexing voxels of the sensing region due to coordination by the fusion center (see Section~\ref{subsec: agent_perception_model}).
Voxel $v$ is called \emph{sparse} on agent $k$ if and only if the corresponding feature vector $\bff_{k,v}$ is a zero vector. Each agent calculates a binary sparsity vector $\mathbf{s}_{k}\in\{0,1\}^{V}$, $k=1,\ldots, K$, indicating the observed sparsity pattern of its voxels. Specifically, $\mathbf{s}_{k}[v]=0$ if voxel $v$ on agent $k$ is sparse and $\mathbf{s}_{k}[v]=1$ otherwise, i.e., 
\begin{equation}
    \bs_k[v] = 
    \begin{cases}
    1, & \Vert \mathbf{f}_{k,v} \Vert_0\geq 1, \\
    0, & \text{otherwise},
    \end{cases}
\end{equation}
where $\Vert \mathbf{f} \Vert_0$ is the vector zero-norm defined as the number of non-zero elements in $\mathbf{f}$. All agents report their sparsity vectors to the fusion center via a reliable control channel. The server assembles them into a sparsity pattern: $\mathbf{S}=\left[ \mathbf{s}_{1}, \mathbf{s}_{2}, \cdots, \mathbf{s}_{K} \right]^{T}$.
The entry on the $k$-th row and $v$-th column of matrix $\mathbf{S}$ is denoted as $S_{k,v}=\bs_k[v]$.

\subsection{Radio Resource Allocation} 

Given the sparsity pattern, $\mathbf{S}$, and transmit CSI, $\{h_{k,m}\}$, the server allocates subcarriers and transmit power for each agent. We denote $\mathbf{A}\in \{0,1\}^{V\times M}$ as the VoCa pairing matrix, where the $(v,m)$-th entry is given as
\begin{equation}
    A_{v,m} = 
    \begin{cases}
    1, & \text{subcarrier $m$ paired with voxel $v$,}\\
    0, & \text{otherwise}.
    \end{cases}
\end{equation}
To assure orthogonality between aggregations of all voxels, the following constraints are applied on assigning subcarriers:
\begin{equation}
    \label{eqn: pairting-orthgonality}
    \sum_{v=1}^{V} A_{v,m} \leq 1, \ \forall m=1, 2, \cdots, M.
\end{equation}
On the other hand, each voxel occupies exactly one subcarrier:
\begin{equation}
    \label{eqn: pairting-single}
    \sum_{m=1}^{M} A_{v,m} = 1, \ \forall v=1, 2, \cdots, V.
\end{equation}

Let $|p_{k,m}|^2$ denote the transmit power invested to subcarrier $m$ by agent $k$. Then, $\{\vert p_{k,m} \vert^2\}$ depend on the sparsity of paired voxels channel gains, and receive SNRs (after aggregation). { To be specific, agent $k$ does not participate in the aggregation of voxel $v$ if $S_{k,v}=0$, thereby setting its transmitting power to zero on the subcarrier designated for voxel $v$. This is mathematically given by: $p_{k,m}=0$ if $\sum_{v=1}^{V}S_{k,v}A_{v,m}=0$.}
All the agents participating in the transmission on the designated subcarrier shall set transmit power to align their signal magnitude as required for AirComp\cite{GX2020TWC}. It follows that the receive SNR, denoted as $\gamma_v$ for voxel $v$, is given as
\begin{equation}
    \label{eqn: receive-snr}
    \gamma_v = \sum_{m=1}^{M} A_{v,m} \frac{\vert p_{k,m}  h_{k,m} \vert^2 }{N_0}, \ \forall k\in\{k^{\prime} \vert S_{k^{\prime},v}=1\}.
\end{equation}
The above resource allocation decisions, $\mathbf{A}$ and $\{\gamma_v\}_{v=1}^V$, are broadcast to all agents such that each onboard agent sets its precoding coefficients accordingly as follows:
\begin{eqnarray}
    \label{eqn: pre-coding}
    p_{k,m} &=& \frac{\sqrt{N_0}}{h_{k,m}}\sum_{v=1}^{V} \sqrt{\gamma_{v}}S_{k,v} A_{v,m}.
\end{eqnarray}
The control of resource allocation, i.e., VoCa-PPA, is optimized in the subsequent sections.

\subsection{Over-the-Air Fusion}
All agents simultaneously transmit their voxel features using assigned subcarriers and power levels as specified in $\mathbf{A}$ and $\{p_{k,m}\}$. Consider an arbitrary agent $k$ and an arbitrary symbol $\ell$. { Assume that average pooling is the desired fusion function. Then, the feature pre-processing is implemented by normalizing the $\ell$-th feature coefficient of voxel $v$ on agent $k$, $f_{k,v}[\ell]$, yielding the pre-processed feature $\tilde{x}_{k,v}[\ell]$ as given by}
\begin{equation}
    \label{eqn: pre-processing}
    \tilde{x}_{k,v}[\ell] = \frac{1}{\sigma}\left(f_{k,v}[\ell]-\mu\right),
\end{equation}
{where the normalization parameters $\sigma$ and $\mu$ in~\eqref{eqn: pre-processing} are shared by all agents and set such that the distribution of pre-processed features is zero-mean and unit-variance.
The extension to other fusions functions (e.g., max-pooling~\cite{Huang2023TWC}) is straightforward by applying additional post- and/or pre-processing functions.}
The pairing matrix $\mathbf{A}$ maps the pre-processed features, $\{\tilde{x}_{k,v}[\ell]\}_{v=1}^V$ to the $\ell$-th symbol of each subcarrier. Then the symbol transmitted by agent $k$ over subcarrier $m$ can be written as\footnote{For notational simplicity, we have assumed that the features are modulated onto the real, or in-phase, component of the transmitted symbols, which is also common in many AirComp literature, e.g., \cite{Eldar2021TSP,Zhang2023TWC} In practice, it is possible to modulate features on both the in-phase and quadrature components, which reduces the AirComp latency by half (see, e.g., \cite{xu_dpca,wen2024twc}). The extension of our work to this case is straightforward without changing the design of our framework and algorithms.}
\begin{equation}
    \label{eqn: map-to-symbol}
    x_{k,m}[\ell] = \sum_{v=1}^{V} A_{v,m} \tilde{x}_{k,v}[\ell].
\end{equation}

Combining \eqref{eqn: pre-coding} and \eqref{eqn: map-to-symbol} with the AirComp operation in \eqref{eqn: channel_model} yields the $\ell$-th symbol received at the fusion center:
\begin{align}
    y_{m}[\ell] &= \sum_{k=1}^{K} \left[ \left( \sum_{v=1}^{V}\sqrt{N_0\gamma_v} S_{k,v} A_{v,m} \right) x_{k,m}[\ell]\right] + z_{m}[\ell], \nonumber \\
    &= \sum_{v=1}^{V}\left[ A_{v, m} \sqrt{N_0\gamma_{v}} \left( \sum_{k=1}^K S_{k,v}\tilde{x}_{k,v}[\ell]\right) \right] + z_{m}[\ell].
    \label{eqn: received-symbol}
\end{align}
The post-processing operation (i.e., denormalization and real-part extraction) is then designed to compute the estimated fused feature vector for voxel $v$, $\tilde{\bg}_v\in\mathbb{R}^L$, such that its $\ell$-th element is given as
\begin{eqnarray}
    \tilde{g}_{v}[\ell] 
    &=& 
    \Re\left[  \sum_{m=1}^{M}\left[\frac{A_{v,m}}{K\sqrt{N_0 \gamma_{v}}}\left(\sigma y_{m}[\ell] +\sum_{k=1}^{K}\mu S_{k,v}  \right)\right]\right], \nonumber\\
    &=& g_{v}[\ell] + \Re\left[ \frac{\sigma\sum_{m=1}^M A_{v,m} z_{m}[\ell]}{K\sqrt{N_0 \gamma_{v}}}\right].
    \label{eqn: fused-feature-coefficient}
\end{eqnarray}
It follows that the vector can be expressed in terms of its ground-truth given in~\eqref{eqn: fusion-ground-truth} as
\begin{equation}
    \label{eqn: receive-truth-noise}
    \tilde{\mathbf{g}}_{v} = {\mathbf{g}}_{v} + \mathbf{w}_{v},
\end{equation}
where $\mathbf{w}_{v}$ is a vector of i.i.d. Gaussian noise variables following $\mathcal{N}\left(0, \frac{1}{2} K^{-2}\gamma_v^{-1}N_0^{-1}\sigma^2\right)$. Last, the fusion center assembles all the fused voxel feature vectors, $\{\tilde{\bg}_{v}\}$, and feeds them into the downstream perception head to obtain the final inference results.
{
\begin{remark}
    (System Scalability) \emph{One key advantage of Spatial AirFusion against digital orthogonal access is its high scalability w.r.t. the number of agents and data volume. Aligned with real-world challenges, increasing the number of agents in cooperative perception is a trend in relevant literature, e.g., from $2$ agents in F-Cooper\cite{fcooper} to $6$ agents in OPV2V\cite{xu2022opencood} to $12$ agents in V2X-Sim\cite{v2xsim}, which results in growing communication overhead for digital orthogonal access. In contrast, the increase in number of agents does not add to latency or bandwidth consumption in AirFusion thanks to simultaneous access, but also mitigates both channel and data noise as found in \cite{Chen2023arxiv}. The increase in data volume is due to sensor advancements, e.g., LiDAR sensing resolution and range. Spatial AirFusion copes with this challenge by 1) fusion of resolution-invariant voxel features instead of raw data; 2) fusion on sparse non-empty voxels instead of all voxels in the sensing region. }
\end{remark}
}

\section{VoCa-PPA: Problem Formulation}
\label{sec: optimization}
Recall that the VoCa-PPA problem of Spatial AirFusion aims at allocating subcarriers and transmit power to agents/voxels so as to maximize the minimum receive SNR among voxels, which is formulated as follows. Given the pairing constraints~\eqref{eqn: pairting-orthgonality} and \eqref{eqn: pairting-single} and by substituting the channel inversion~\eqref{eqn: pre-coding} into the instantaneous power constraints in \eqref{eqn: power-constraints}, the optimization problem can be formulated as
\begin{equation*}\text{(P1)}\quad 
        \begin{aligned}
        \max\limits_{\mathbf{A},\{\gamma_v\}_{v=1}^{V}}&\min\limits_{v}\quad \, \gamma_v \\
        \mathrm{s. t.}\quad & 
        A_{v,m} \in \{0, 1\}, \ \forall v, m, \\
        &\sum_{v=1}^{V}A_{v,m} \leq 1, \ \forall m, \quad\sum_{m=1}^{M}A_{v,m} = 1, \ \forall v,\\
        &\sum_{m=1}^{M}\frac{N_0}{{|h_{k,m}|^2}}\sum_{v=1}^{V}S_{k,v}A_{v,m}\gamma_{v}\leq P_{\sf max}, \ \forall k.
        \end{aligned}
\end{equation*}
Problem P1 is a mixed-integer programming problem. To simplify it, we derive the optimal receive SNRs as functions of the pairing matrix $\mathbf{A}$, shown in the following lemma. Its proof is by a standard transformation of the power allocation problem given $\bA$ into a linear program and solving it via Lagrange duality and thus omitted for brevity.
\begin{Lemma}[\emph{Optimal Power Allocation}]
    \label{theorem-snr-conditioned-on-A}
    \emph{Given the VoCa pairing matrix $\mathbf{A}$, setting an equal SNR level across all voxels, i.e., $\gamma_v = \gamma^{*}(\mathbf{A})$ for all $v$, is optimal for Problem P1, where $\gamma^{*}(\mathbf{A})$ is given as
    \begin{equation}
        \label{eqn: optimized_snr}
        \gamma^{*}(\mathbf{A})=P_{\sf max} \left( \max\limits_{k} N_0\sum_{v=1}^{V} \sum_{m=1}^{M}\frac{S_{k,v}A_{v,m}}{|h_{k,m}|^{2}} \right)^{-1}.
    \end{equation}
    Substituting $\gamma^{*}(\mathbf{A})$ into~\eqref{eqn: pre-coding} yields the optimal transmit power of each agent over a subcarrier,}
    \begin{equation}
        \label{eqn: snr-to-power}
        p^{*}_{k,m}(\mathbf{A}) = \frac{\sqrt{N_{0}\gamma^{*}(\mathbf{A})}}{h_{k,m}}\sum_{v=1}^{V}S_{k,v}A_{v,m}.
    \end{equation}
\end{Lemma}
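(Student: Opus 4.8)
The plan is to hold the pairing matrix $\mathbf{A}$ fixed and solve the residual power-allocation problem over $\{\gamma_v\}_{v=1}^V$ alone. The last constraint of (P1) is already linear in the $\gamma_v$: collecting coefficients, I define $c_{k,v} = N_0\sum_{m=1}^M \frac{S_{k,v}A_{v,m}}{|h_{k,m}|^2}\ge 0$, so that the per-agent budget reads $\sum_{v=1}^V c_{k,v}\gamma_v \le P_{\sf max}$ for each $k$. (The orthogonality constraint \eqref{eqn: pairting-orthgonality}, forcing at most one voxel per subcarrier, is what lets the squared channel-inversion precoder collapse to this linear form, with no cross terms in the sum over $v$.) The residual problem is therefore the max-min linear program $\max_{\gamma_v\ge 0}\min_v \gamma_v$ subject to these $K$ inequalities with nonnegative coefficients.

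First I would show that equalizing the SNR levels is without loss of optimality, using the monotonicity induced by $c_{k,v}\ge 0$: lowering any $\gamma_v$ can only relax the budget constraints. Concretely, take any feasible $\{\gamma_v\}$ with objective $t=\min_v \gamma_v$ and replace it by the uniform allocation $\gamma_v'=t$ for all $v$. Since $0\le \gamma_v'\le \gamma_v$, we get $\sum_v c_{k,v}\gamma_v' \le \sum_v c_{k,v}\gamma_v \le P_{\sf max}$ for every $k$, so the uniform point is feasible and attains the same objective $t$. Hence some optimizer has all $\gamma_v$ equal, and it remains only to maximize the common value.

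Setting $\gamma_v\equiv\gamma$ reduces the $k$-th constraint to $\gamma\sum_v c_{k,v}\le P_{\sf max}$, i.e. $\gamma\le P_{\sf max}\big(\sum_v c_{k,v}\big)^{-1}$; the largest feasible common value is thus $\gamma^{*}(\mathbf{A}) = P_{\sf max}\big(\max_k \sum_v c_{k,v}\big)^{-1}$, which is exactly \eqref{eqn: optimized_snr} after expanding $\sum_v c_{k,v}$. The power formula \eqref{eqn: snr-to-power} then follows immediately by substituting $\gamma_v=\gamma^{*}(\mathbf{A})$ into the precoder \eqref{eqn: pre-coding} and factoring the constant $\sqrt{\gamma^{*}(\mathbf{A})}$ out of the sum over $v$.

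I expect the only real subtlety to be the clean justification that uniform allocation loses no optimality; the text's reference to Lagrange duality points to an alternative certificate. In that route one would attach multipliers $\lambda_k\ge 0$ to the $K$ budgets and check the KKT conditions: complementary slackness forces the bottleneck agent $k^{\star}=\arg\max_k \sum_v c_{k,v}$ to bind, pinning the common SNR, while the remaining multipliers vanish. Both routes isolate the same bottleneck agent whose channel inversions exhaust the budget and yield the stated closed form; I would favour the direct monotonicity argument for brevity, noting as vacuous the degenerate case $\max_k \sum_v c_{k,v}=0$ in which no agent participates in any aggregation.
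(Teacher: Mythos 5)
Your proof is correct. The reduction of the per-agent budget to $\sum_{v} c_{k,v}\gamma_v \le P_{\sf max}$ with $c_{k,v}=N_0\sum_{m}S_{k,v}A_{v,m}|h_{k,m}|^{-2}\ge 0$ is exactly the linear program the paper alludes to, and your two steps --- (i) replacing any feasible $\{\gamma_v\}$ by the uniform allocation at level $t=\min_v\gamma_v$, which stays feasible by nonnegativity of the coefficients and preserves the objective, and (ii) maximizing the common level against the $K$ budgets to land on the bottleneck agent --- deliver precisely \eqref{eqn: optimized_snr}, with \eqref{eqn: snr-to-power} following by substitution into \eqref{eqn: pre-coding}. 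The paper omits its proof but states that it proceeds by casting the problem as a linear program and invoking Lagrange duality; your primary route replaces the duality certificate with a direct monotonicity argument, which is more elementary and self-contained (no KKT machinery, no multiplier bookkeeping), while your sketched dual alternative recovers the paper's stated route and additionally identifies which budget binds via complementary slackness. The only points worth making explicit if you write this up are the ones you already flag: $\gamma_v\ge 0$ is implicit in the definition of $P_{\mathsf{rx},m}$, and the degenerate case $\max_k\sum_v c_{k,v}=0$ (no agent participating anywhere) is excluded in any meaningful instance, so the closed form is well defined.
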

It can be observed from \eqref{eqn: optimized_snr} that the achievable SNR levels depend on a bottleneck agent characterized by weakest overall channel conditions by considering all voxels and subcarriers.
Without compromising its optimality, Problem P1 can be simplified by substituting~\eqref{eqn: optimized_snr} into the objective. This leads to the following equivalent VoCa Pairing problem:
\begin{equation*}\text{(P2)}\quad 
        \begin{aligned}
        \min\limits_{\mathbf{A}}\quad &\max\limits_{k}\quad \, \sum_{v=1}^{V} \sum_{m=1}^{M}c_{k,m}S_{k,v}A_{v,m}\triangleq F(\bA) \\
        \mathrm{s. t.}\quad & 
        A_{v,m} \in \{0, 1\}, \ \forall v, m, \\
        & \sum_{v=1}^{V}A_{v,m} \leq 1, \ \forall m, \quad\sum_{m=1}^{M}A_{v,m} = 1, \ \forall v,
        \end{aligned}
\end{equation*}
where the constant $c_{k,m}\triangleq \frac{N_0}{|h_{k,m}|^{-2}}$. 
This is a combinatorial optimization problem with a max-linear objective, which is known to be NP-hard in general\cite{ehrgott2006discussion}. A set of algorithms are designed in the following sections to overcome this challenge. 

\section{Greedy VoCa-PPA Algorithm}
\label{subsec: greedy-pairing}
In this section, we first develop a low-complexity solution to Problem P2 for VoCa Pairing based on a greedy heuristic. Then, combining the greedy algorithm and the optimal power allocation scheme yields the greedy VoCa-PPA algorithm for Spatial AirFusion control. 
\subsection{Greedy VoCa Pairing}

The proposed greedy pairing algorithm in principle sequentially pairs a single voxel with the locally optimal subcarrier. The specific algorithm is designed as follows.
\begin{itemize}
    \item \textbf{Initialization.} The pairing matrix $\mathbf{A}$ is initialized as $\mathbf{A} \leftarrow \mathbf{0}^{V\times M}.$
    \item \textbf{Iteration.} In each iteration, say the $v$-th one, $\mathbf{A}$ is updated in a greedy manner, i.e., upon solving an optimization problem that seeks the best subcarrier for the $v$-th voxel. Specifically,  only the $v$-th voxel is addressed in this iteration. Dropping other voxels in Problem P2 yields the greedy optimization problem for voxel $v$ as
\begin{equation*}\text{(P3)} 
        \begin{aligned}
        \min\limits_{\{A_{v,m}\}_{m=1}^{M}}\quad &\max\limits_{k}\quad \, \sum_{m=1}^{M}{c_{k,m}}S_{k,v}A_{v,m} \\
        \mathrm{s. t.}\quad & 
        A_{v,m} \in \{0, 1\}, \ \forall m, \\ 
        &\sum_{n=1}^{v}A_{n,m} \leq 1, \ \forall m,\quad\sum_{m=1}^{M}A_{v,m} = 1.
        \end{aligned}
\end{equation*}
In Problem P3, only the pairing parameters for voxel $v$ are optimized while the others are fixed. The optimal solution to Problem P3, $\big\{A^{*}_{v,m}\big\}_{m=1}^{M}$, can be easily obtained for any given $v$ as,
\begin{equation}
    \label{eqn: greedy_pairing}
    A^{*}_{v,m} = 
    \begin{cases}
    1,&m = \mathop{\arg\min}\limits_{m\in\{m^{\prime}\vert \sum_{n=1}^{v-1}A_{n,m^{\prime}}=0\}} \max\limits_{k} \ {c_{k,m}}{S_{k,v}} ,\\
    0,&\text{otherwise}.
    \end{cases}
\end{equation}
To end the $v$-th iteration, the entries specifying pairing of voxel $v$ in $\bA$ are updated as ${A}_{v,m}\leftarrow{A}^{*}_{v,m}$ for all $m$.
\end{itemize}

\textbf{Sequence optimization.} An optimized sequence of voxels in greedy pairing can boost the performance, i.e., improve the achieved voxel-level receive SNRs. To this end, we first propose a metric for sorting the voxels. One voxel can differ from another in the level of sparsity, i.e., the number of agents participating in aggregation. We refer to the voxels involving a small number of agents as \emph{high-sparsity} voxels, in comparison against the \emph{low-sparsity} voxels involving a large number of agents. Intuitively, the latter should be assigned subcarriers with favorable channel conditions as it is well-known in the AirComp literature that the receive SNR decreases as more agents participate\cite{Vucetic2020TWC}. Based on this principle, we propose a sparsity-aware permutation strategy that prioritizes low-sparsity voxels in greedy pairing. The permutation function $\pi(\cdot)$ maps an arbitrary entry $v$ in the set $\{1, 2, ..., V\}$ to its image $\pi(v)$, which determines the index of iteration in the greedy pairing algorithm. Specifically, $\pi(\cdot)$ is constructed via sorting the sequence $1, 2, ..., V$ in descending order of their sparsity levels $\sum_{k=1}^{K}S_{k,v}$. This yields a sorted sequence $\pi(1), \pi(2), ..., \pi(V)$, where we place $v_1$ before $v_2$ in the case of $\sum_{k=1}^{K}S_{k,v_1}=\sum_{k=1}^{K}S_{k,v_2}$ if $v_1<v_2$. It can be easily verified that the constructed $\pi(\cdot)$ is an \emph{bijective} function and prioritizes low-sparsity voxels. 

\subsection{Greedy VoCa-PPA}
The control algorithm, named greedy VoCa-PPA, combines the above greedy pairing with an optimized sequence and the power allocation scheme in Lemma~\ref{theorem-snr-conditioned-on-A}, which is summarized in Algorithm~\ref{algo: greedy_control}. Its input $\mathbf{H}$ is a $K$-by-$M$ matrix of channel gain with $h_{k,m}$ being its entry in the $k$-th row and $m$-th column. As a remark, the control signalling in Algorithm~\ref{algo: greedy_control} involves broadcasting a \emph{sparse} and \emph{binary} matrix $\mathbf{A}^{\dagger}$ and a scalar $\gamma^{*}\left(\mathbf{A}^{\dagger} \right)$. The former of the two control parameters can be easily encoded into $\log_2\left(\frac{M!}{(M-V)!}\right)\leq V\log_2(M)$ bits while the latter can be quantized into $32$ bits following the floating-point precision convention. The signalling thus can be implemented over a downlink feedback channel with its overhead neglected.
{
\subsection{Complexity Analysis}
The time complexity of Algorithm~1 is presented as follows. Before starting the iteration, we sort $\{c_{1,m},\ldots,c_{K,m}\}$ in descending order for each $m=1,\ldots,M$ and store the results. The complexity of this step is $\mathcal{O}(MK\log K)$. Then, in the $v$-th iteration, \eqref{eqn: greedy_pairing} shall be evaluated. The inner $\max\limits_{k} \ {c_{k,m}}{S_{k,v}}$ is obtained with $\mathcal{O}(1)$ given $\{c_{1,m},\ldots,c_{K,m}\}$ sorted and $S_{k,v}$ binary, and the outer operation costs $\mathcal{O}(M)$. Hence, the complexity of the iterating process is $\mathcal{O}(MV)$. The total complexity of Algorithm~1 is thus $\mathcal{O}(M\max\{K\log K,V\})$.

This complexity is comparable to basic algorithms in OFDM such as \emph{Fast Fourier Transform} (FFT). Moreover, our algorithm does not involve floating-point multiplications, making it highly efficient for implementations in standard hardware\cite{hardware_testbed}.
}
\begin{algorithm}[t]
\caption{Greedy VoCa-PPA}
\label{algo: greedy_control}
\textbf{Input:} Sparsity matrix $\mathbf{S}$ and channel matrix $\mathbf{H}$\;
\textbf{Prioritization:} Determine $\pi(\cdot)$ as given in Section~\ref{subsec: greedy-pairing}\;
\textbf{Initialization:} $\mathbf{A}^{\dagger} =\mathbf{0}$\;
\textbf{for} $v=1,2,\cdots,V$ \textbf{do (greedy pairing)}\\
\ \ \ \  Evaluate $A^{\dagger}_{\pi(v),m}$ for $m=1, 2, ..., M$ by~\eqref{eqn: greedy_pairing} \;
\textbf{Setting SNR:} Substitute $\mathbf{A}^{\dagger}$ into~\eqref{eqn: optimized_snr} for $\gamma^{*}\left(\mathbf{A}^{\dagger} \right)$\;
\textbf{Signalling:} Broadcast the control parameters $\mathbf{A}^{\dagger}$, $\gamma^{*}\left(\mathbf{A}^{\dagger}\right)$ to agents, which then set their transmit power by~\eqref{eqn: snr-to-power}\;
\end{algorithm}

\section{Optimal VoCa-PPA: Compact Tree Design}
\label{sec: optimal_paining}
The greedy VoCa-PPA algorithm in the preceding section is computation-efficient but sub-optimal. In this and the next sections, we present an optimal and efficient approach for solving the VoCa-PPA Problem in P1 or equivalently Problem P2. The tree-search based approach consists of two components -- compact tree design in this section and fast tree search in the next section. In general, Problem P2 can be viewed as a special case of the max-linear assignment problem, and its optimal solution can be searched for using the well-known ranking method (see, e.g., \cite{BELHOUL201497}). The novelty of our design, which yields a higher efficiency than the existing method, lies in exploiting the special structure of Problem P2. In particular, a derived useful property of its objective leads to a dramatic reduction of the dimensionality of the search space. {The motivation of organizing the search into a search tree is to reduce the search complexity by node pruning with a branch-and-bound-inspired method. The method hinges on proper selection of the branching variable and bounding the global objective with local ones, as will be introduced shortly.}  As a common practice in solving bipartite matching problems, assume equal numbers of voxels and subcarriers $M=V$ in the sequel without loss of generality as the case of $M>V$ can be augmented with $(M-V)$ dummy voxels with all-zero sparsity indicators for all agents.

\subsection{A Useful Property of Objective Function}

Consider the objective function of Problem P2, $F(\bA)=\max_k f_k(\bA)$, where $f_{k}(\bA)\triangleq \sum_{v=1}^V\sum_{m=1}^{M}c_{k,m}S_{k,v}A_{v,m}$.
To facilitate exposition, let $m(v)$ denote the index of the unique non-zero entry in the $v$-th row of $\mathbf{A}$, which indicates that the $v$-th voxel is mapped to the $m(v)$-th subcarrier. We thus have $m(v_i)\neq m(v_j)$ when $v_i\neq v_j$. Then, $f_{k}(\mathbf{A})$ can be rewritten as
\begin{equation}\label{eqn: single-sensor-obj}
\begin{aligned}
    f_{k}(\mathbf{A}) &= \sum_{v=1}^V S_{k,v} c_{k,m(v)}= \sum_{v\in \mathcal{V}_k} c_{k,m(v)} = \sum_{m\in \mathcal{M}_{k}} c_{k,m},
\end{aligned}
\end{equation}
where $\mathcal{V}_{k} = \{v|S_{{k},v}=1\}$, the index set of all non-sparse voxels for agent ${k}$, and $\mathcal{M}_{k} = \{m(v)\}_{v\in \mathcal{V}_{k}}$, the set of subcarriers selected for non-sparse voxels, is the image of $\mathcal{V}_{k}$ under the mapping $m(\cdot)$ with $|\mathcal{M}_{k}| = |\mathcal{V}_{k}|$. An important observation is that $f_{k}$ depends only on the set $\mathcal{M}_{k}$ but not the specific one-to-one mappings. As a result, for two voxels $v_1$ and $v_2$ both in $\mathcal{V}_{k}$ or $\mathcal{V}\setminus\mathcal{V}_{k}$ (or equivalently having identical sparsity indicators $S_{k,v_1}=S_{k,v_2}$ on agent $k$) swapping their associated subcarriers does not alter the value of $f_k$ as $\mathcal{M}_k$ remains unchanged. This argument can be extended from $f_k$ to the objective $F(\bA)$ since it is a function of $\{f_k(\bA)\}$. Specifically, consider the case that two voxels $v_1$ and $v_2$ have identical sparsity indicators for all $K$ agents, or in other words, the two voxels have exactly the same sparsity vector, i.e., $\bt_{v_1}=\bt_{v_2}$, where $\mathbf{t}_{v}$ is the $v$-th column of the sparsity pattern matrix $\mathbf{S}$. Then exchanging their assigned subcarriers does not change the objective value. In such cases, we call the two voxels \emph{homogeneous} due to their equivalence in subcarrier assignment. Aggregating all voxels which are homogeneous to each other results in the concept of a \emph{homogeneous subset}, denoted by $\mathcal{H}(\br^q)\triangleq \mathcal{H}^q$, where $\br^q\in\{0,1\}^K$ indicates the sparsity vector shared by all voxels in $\mathcal{H}^q$. Mathematically, for all $v\in\mathcal{H}^q$, $\bt_{v}=\br^q$. As $\br^q$ is a binary vector with length $K$, it has at most $2^K$ possibilities, as indexed by $q=1,\ldots,2^K$. The above property is stated formally in the following lemma.

\begin{Lemma}\label{lemma: homogeneous_voxels}
    \emph{Consider a VoCa Pairing $m(\cdot):\mathcal{V}\rightarrow\mathcal{M}$ and two voxels in the same \emph{homogeneous subset} $v_1,v_2\in\mathcal{H}^q$. A new pairing $m'(\cdot)$ with $m'(v_1)=m(v_2)$, $m'(v_2)=m(v_1)$ while $m(v)=m'(v)$ for all $v\neq v_1,v_2$ yields the same objective value as $m(\cdot)$.}
\end{Lemma}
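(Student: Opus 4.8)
The plan is to leverage the set-valued representation of each per-agent objective already established in \eqref{eqn: single-sensor-obj}, namely $f_k(\bA) = \sum_{m\in\mathcal{M}_k} c_{k,m}$, where $\mathcal{M}_k$ is the image of the non-sparse voxel set $\mathcal{V}_k$ under the pairing map. Since the global objective is $F(\bA) = \max_k f_k(\bA)$, it suffices to show that the swap leaves every individual $f_k$ unchanged; invariance of the maximum over $k$ then follows immediately. First I would fix an arbitrary agent $k$ and exploit the defining property of homogeneity: because $v_1,v_2\in\mathcal{H}^q$ share the common sparsity vector $\bt_{v_1}=\bt_{v_2}=\br^q$, we have $S_{k,v_1}=S_{k,v_2}$, so only two mutually exclusive cases arise rather than the four combinations of $(S_{k,v_1},S_{k,v_2})$ one might expect a priori.

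In the case $S_{k,v_1}=S_{k,v_2}=0$, neither $v_1$ nor $v_2$ lies in $\mathcal{V}_k$, so neither contributes to $\mathcal{M}_k$ under either pairing; since $m$ and $m'$ agree on all voxels outside $\{v_1,v_2\}$, the set $\mathcal{M}_k$ is identical for the two pairings and hence $f_k(\bA')=f_k(\bA)$. In the case $S_{k,v_1}=S_{k,v_2}=1$, both voxels belong to $\mathcal{V}_k$; under $m$ the pair contributes the indices $\{m(v_1),m(v_2)\}$ to $\mathcal{M}_k$, while under $m'$ it contributes $\{m'(v_1),m'(v_2)\}=\{m(v_2),m(v_1)\}$, which is the same two-element set. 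As all remaining voxels in $\mathcal{V}_k$ are mapped identically by $m$ and $m'$, the image set $\mathcal{M}_k$ is again invariant, and because $f_k$ depends on $\mathcal{M}_k$ only as a set and not on the underlying one-to-one correspondence, $f_k(\bA')=f_k(\bA)$.

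Combining the two cases gives $f_k(\bA')=f_k(\bA)$ for every $k$, whence $F(\bA')=\max_k f_k(\bA')=\max_k f_k(\bA)=F(\bA)$, which closes the argument. The only point requiring any care — and the closest thing to an obstacle — is confirming that the swap yields a valid pairing in the first place, i.e. that $m'$ remains a bijection from $\mathcal{V}$ onto $\mathcal{M}$; this holds because transposing the images of two voxels preserves injectivity and surjectivity. All the genuine work has been front-loaded into the derivation of \eqref{eqn: single-sensor-obj}, so the remainder reduces to this short case check.
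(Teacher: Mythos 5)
Your proof is correct and follows essentially the same route as the paper, which establishes the lemma via the preceding observation that each $f_k$ depends only on the image set $\mathcal{M}_k$ in \eqref{eqn: single-sensor-obj}, so that homogeneity ($S_{k,v_1}=S_{k,v_2}$ for all $k$) leaves every $\mathcal{M}_k$, hence every $f_k$ and the maximum $F$, unchanged under the swap. Your explicit two-case check and the remark on $m'$ remaining a bijection are just slightly more detailed renderings of the same argument.
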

The above lemma suggests that once the mapping between a homogeneous subset of voxels to an equal-size subcarrier subset is determined, the element-wise mapping can be arbitrary without altering the objective value. The property is the fundamental reason for the efficiency of the proposed solution approach.
\subsection{Compact Solution Space}
The property in Lemma~\ref{lemma: homogeneous_voxels} is exploited in the sequel to define a compact solution space comprised of subset-to-subset mappings, which features much lower dimensionality as opposed to the original space of all possible one-to-one mappings $m(\cdot): \mathcal{M}\rightarrow\mathcal{V}$. 

To begin with, relevant terminologies are introduced as follows. Let $\{\mathcal{P}^j\}_{j=1}^{N(\phi)}$ be a non-overlapping partition of the voxel set $\mathcal{V}$ with $\bigcup_{j=1}^{N(\phi)}\mathcal{P}^j=\mathcal{V}$ and $\mathcal{P}^i\cap\mathcal{P}^j=\emptyset$ for any $i\neq j$, where $1\leq N(\phi) \leq V$ is the number of disjoint subsets.  A subset-to-subset mapping $\phi$ with $\mathrm{dom}(\phi)=\{\mathcal{P}^j\}_{j=1}^{N(\phi)}$ pairs $\mathcal{P}^j$ with $\phi(\mathcal{P}^j)$ for $j=1,\ldots,N(\phi)$
where $\{\phi(\mathcal{P}^j)\}_{j=1}^{N(\phi)}$ is required to be a non-overlapping partition of $\mathcal{M}$, i.e., $\bigcup_{j=1}^{N(\phi)}\phi(\mathcal{P}^j)=\mathcal{M}$ and $\phi(\mathcal{P}^i)\cap\phi(\mathcal{P}^j)=\emptyset$  for any $i\neq j$. In addition, equal sizes are set for a voxel subset and its paired subcarrier subset, as given by $|{\mathcal{P}^j}|=|{\phi(\mathcal{P}^j)}|$ for all $j$.
A bijective mapping, $m(v)$, \emph{satisfies} $\phi$ if and only if for any $v$, $v\in \mathcal{P}^j$ leads to $m(v)\in \phi(\mathcal{P}^j)$. In this sense, $\phi$ encompasses all bijective mappings between $\mathcal{V}$ and $\mathcal{M}$ that maps ${\mathcal{P}}^j$ exactly to $\phi(\mathcal{P}^j)$.

To completely determine the objective function of Problem P2 requires a subset-to-subset mapping $\phi_\mathrm{sol}$ with $\mathrm{dom}(\phi_\mathrm{sol})=\{\mathcal{H}^j\}_{j=1}^{2^K}$, which specifies the mapped subcarrier subset for any homogeneous voxel subsets, say, $\mathcal{H}^j$, as $\phi_\mathrm{sol}(\mathcal{H}^j)$.
    {Denote the set of all bijective mappings that satisfy $\phi_\mathrm{sol}$ as $\mathcal{C}(\phi_\mathrm{sol})$, which by Lemma~\ref{lemma: homogeneous_voxels} yield the same objective value.} Note that the union of $\mathcal{C}(\phi_\mathrm{sol})$ for all possible $\phi_\mathrm{sol}$ covers exactly the original solution space. It is therefore equivalent to consider the reduced-dimension space of $\phi_\mathrm{sol}$ as the solution space of Problem P2. The dimensions of the new solution space are determined by the number of possibilities of disjoint set partitions $\{\phi_\mathrm{sol}(\mathcal{H}^1),\phi_\mathrm{sol}(\mathcal{H}^2),\ldots,\phi_\mathrm{sol}(\mathcal{H}^{2^K})\}$ with $\bigcup_{j=1}^{2^K}\phi_\mathrm{sol}(\mathcal{H}^{j})=\mathcal{M}$ and the size of each subset fixed as $|\phi_\mathrm{sol}(\mathcal{H}^{j})|=|\mathcal{H}^j|$, which is calculated as $\frac{M!}{|\mathcal{H}_1|!|\mathcal{H}_2|!\cdots |\mathcal{H}_{2^K}|!}$. Thereby, we can achieve complexity reduction by orders of magnitude as compared with the original solution space, which encompasses all possible mappings between $\mathcal{M}$ and $\mathcal{V}$ and therefore has a size of $M!$.

\subsection{Tree Construction}
Finding the optimal solution to Problem P2 can be achieved by an enumeration of the compact solution space defined in the preceding subsection, which is still exponential in $M$ due to the suggested NP-completeness of Problem P2. We propose to organize the solution enumeration into a tree search. A naive approach to tree construction would be to sequentially branch on the selection of subsets $\phi_\mathrm{sol}(\mathcal{H}^{j})$, but this method is unlikely to benefit from complexity reduction by node pruning. Instead, our approach is to sequentially branch on the local objective $f_k(\bA)$ by assigning subcarriers to certain groups of homogeneous subsets identified by the sparsity indicators of the currently considered agent, which underpins the efficient tree-search algorithm with node pruning in Section~VI-A. In the sequel, we index the $K$ agents sequentially from $1$ to $K$. However, such an agent ordering can be arbitrary, which affects not the optimality but the empirical complexity. In this aspect, an agent-ordering algorithm is presented in the next section.  The search tree is illustrated  in Fig.~\ref{fig_tree_search}. For a general node $w$, let $d(w)$ denote its depth, i.e., the length of the (shortest) path connecting it to the root node. The maximum depth of the search tree equals $K$, and a node with depth $K$ is defined as a \emph{leaf node}. 
\subsubsection{Branching on Local Objectives}

 To begin with, we discuss the branches of the root node $w_0$ with depth $0$, i.e., its set of child nodes with depth $1$, by analyzing possible local objectives for agent $1$. Recall that for agent $1$, its associated local objective, $f_{1}(\bA)$ in \eqref{eqn: single-sensor-obj}, is fully determined by the subcarrier set assigned to agent $1$'s non-sparse voxels $\mathcal{V}_{1}$. Let $r^q_k$ be the $k$-th element of $\br^q$, indicating whether the homogeneous subset $\mathcal{H}^q$ is sparse on agent $k$. We can express $\mathcal{V}_{1}$ as $\mathcal{V}_{1}=\bigcup\{\mathcal{H}^q|r^q_1=1\}\triangleq \mathcal{U}_1$, i.e., the union set of homogeneous subsets which are non-sparse on agent ${1}$, and similarly $\mathcal{V}\setminus\mathcal{V}_{1}=\bigcup\{\mathcal{H}^q|r^q_1=0\}\triangleq \mathcal{U}_0$. This can be interpreted as dividing all homogeneous sets into two groups according to the sparsity on agent ${1}$.  The local objective values $f_{1}$ are then determined by $\phi_1$ with $\mathrm{dom}(\phi_1)=\{\mathcal{U}_1, \mathcal{U}_0\}$, which characterizes the assignment of subcarriers between non-sparse and sparse voxels of agent $1$, and mathematically given by 
 \begin{equation}\label{eqn: local_obj_rho_1}
     f_{1}(\phi_1)=\sum_{m\in \phi_1(\mathcal{U}_1)}c_{1,m}.
 \end{equation} 
 The number of all possible $\phi_1$, which generate (generally) distinct local objective values $f_{1}$, is equal to the number of size-$|\mathcal{U}_0|$ subsets of $\mathcal{M}$, i.e., $N(w_0)=\frac{M!}{|\mathcal{U}_0|!(M-|\mathcal{U}_0|)!}$. Each of the possible $\phi_1$ is represented by one child node of the root node, $w_j$, where $j=1,\ldots,N(w_0)$. The node set $\{w_1,\ldots,w_{N(w_0)}\}$ constitute all branches, or child nodes of the root node $w_0$. 
 
     \begin{figure}[t]
    \centering
\includegraphics[width=0.48\textwidth]{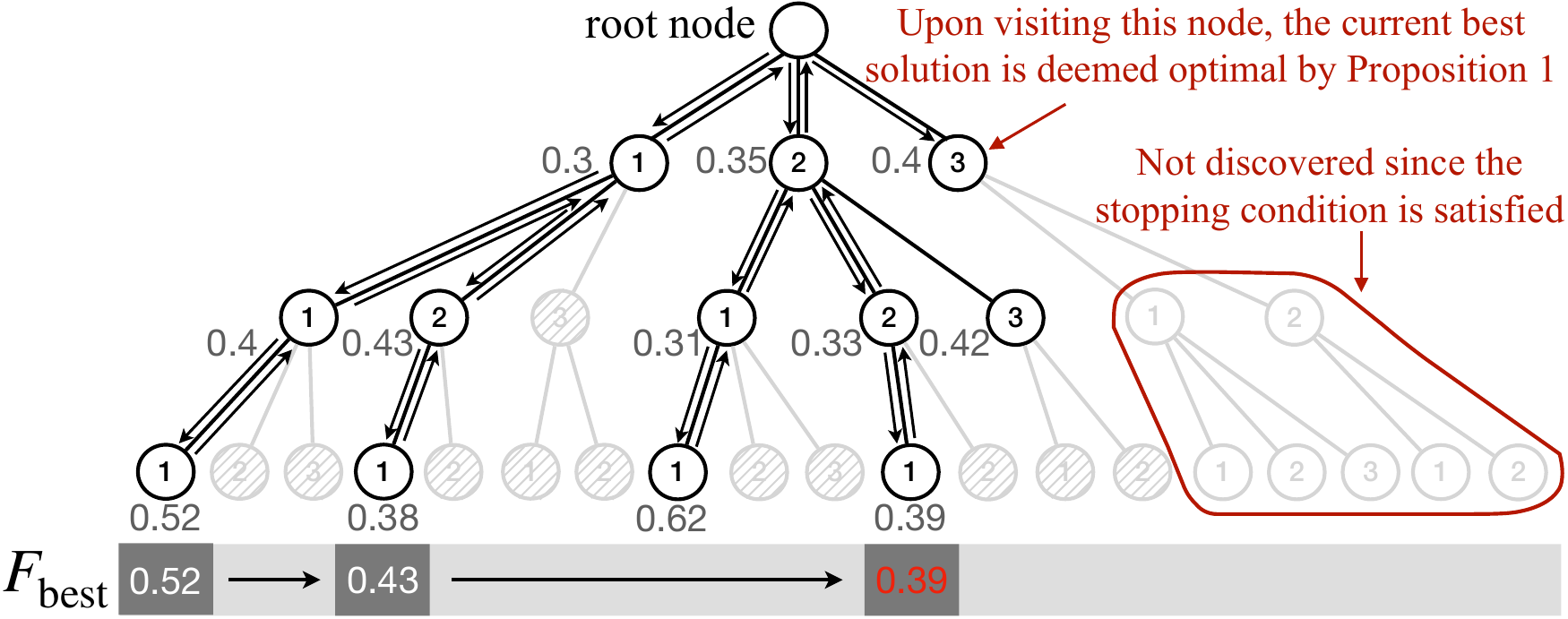}
\vspace{-2mm}
    \caption{An example of a search tree for the optimal solution of Problem P2, with maximum depth, i.e., the number of agents, $K=3$. Nodes pruned by Proposition~2 are marked with strides. }
    \label{fig_tree_search}
\end{figure}

Consider an arbitrary node, say $w_j$, and its associated subset-to-subset mapping is denoted as $\phi_1^{w_j}$ with $\mathrm{dom}(\phi_1^{w_j})=\{\mathcal{U}_1, \mathcal{U}_0\}$. While $\phi_1^{w_j}$ fixes agent $1$'s local objective value to $f_{1}(\phi_1^{w_j})$ by \eqref{eqn: local_obj_rho_1}, it only specifies the image of $\mathcal{U}_1, \mathcal{U}_0$, which are unions of homogeneous subsets, rather than each of $\{\mathcal{H}_j\}$, resulting in under-determined values for $f_{j}$, $j>1$. We thus aim to further subdivide the current mapping,  $\phi_1^{w_j}$, to a finer granularity by considering the sparsity pattern of the next agent $2$ such that $f_{2}$ is determined while $f_{1}$ fixed as $f_{1}(\phi_1^{w_j})$. Since $f_{2}$ depends on the image of $\bigcup\{\mathcal{H}^q|r^q_2=1\}$ and $\bigcup\{\mathcal{H}^q|r^q_2=0\}$, to determine both $f_{1}$ and $f_{2}$ requires mapping each of $\{\mathcal{U}_{11},\mathcal{U}_{10},\mathcal{U}_{01},\mathcal{U}_{00}\}$ to a subcarrier subset, where $\mathcal{U}_{b_1 b_2}\triangleq\bigcup\{\mathcal{H}^q|r^q_1=b_1,r^q_2=b_2\}$. Such a mapping is denoted as $\phi_2$ with $\mathrm{dom}(\phi_2)=\{\mathcal{U}_{11},\mathcal{U}_{10},\mathcal{U}_{01},\mathcal{U}_{00}\}$. On the other hand, conditioning on $f_{1}=f_{1}(\phi_1^{w_j})$ requires $\phi_2(\mathcal{U}_{11})\cup \phi_2(\mathcal{U}_{10})=\phi_1^{w_j}(\mathcal{U}_{1})$ and $\phi_2(\mathcal{U}_{01})\cup \phi_2(\mathcal{U}_{00})=\phi_1^{w_j}(\mathcal{U}_{0})$. Under the above condition, the possible outcomes of $f_{2}$ while fixing $f_1$ constitute all possible branches of $w_j$. This branching procedure can be recursively applied until reaching a leaf node, which determines all $\{f_j\}_{j=1}^K$. In the sequel, the branching procedure for a general node is presented. 
 
\subsubsection{General Nodes}
Consider a general node $w$. The steps to discover its child nodes are as follows. The node $w$ with depth $d(w)$ represents a partial solution to Problem P2 characterized by a subset-to-subset mapping $\phi_{d(w)}^w$. It domain is given as $\mathrm{dom}(\phi_{d(w)}^w)=\{\mathcal{U}_{b_1 b_2\cdots b_{d(w)}}\}_{b_i\in\{0,1\}}$, where $\mathcal{U}_{b_1 b_2\cdots b_{d(w)}}\triangleq\bigcup\{\mathcal{H}^q|r^q_{1}=b_1, \ldots,r^q_{d(w)}=b_{d(w)}\}$. The local objectives $f_1,\ldots,f_{d(w)}$ are determined by $\phi_{d(w)}^w$, as given by
 \begin{equation}\label{eqn: local_obj_rho_w}
     f_{j}(\phi_{d(w)}^w)=\sum_{m\in\bigcup_{b_{j}=1}\phi_{d(w)}^w\left(\mathcal{U}_{b_1 b_2\cdots b_{d(w)}}\right)}c_{j,m},\quad 1\leq j \leq d(w).
 \end{equation} 
Each node is recorded with its latest local objective value $f_{d(w)}(\phi_{d(w)}^w)$. If $w$ is a leaf node, i.e., $d(w)=K$, then $\phi_K^w$ yields a \emph{solution} to Problem~P2 as it determines the local objective for all $K$ agents and thus the global objective, which is the maximum of single-agent objective values recorded with nodes on the path from the root node to node $w$. Mathematically, the resultant global objective is 
\begin{equation}
    F(\phi_K^w)=\max_{j=1,\ldots,K} f_j(\phi_K^w).
\end{equation}
One can also verify that $\mathrm{dom}(\phi_{K}^w)=\{\mathcal{H}^j\}_{j=1}^{2^K}$, implying that $\phi_{K}^w$ specifies the mapped subset of all homogeneous voxel subsets. If $d(w)<K$, then $\phi(w)$ is a partial solution to Problem~P2, suggesting that node $w$ needs further subdivision for determining the next local objective value $f_{d(w)+1}$, of which the different possibilities constitute the set of child nodes of $w$. Each of these child nodes, say $\tilde{w}$, defines a subset-to-subset mapping with a finer granularity, $\phi_{d(w)+1}^{\tilde{w}}$, with domain $\mathrm{dom}(\phi_{d(w)+1}^{\tilde{w}})=\{\mathcal{U}_{b_1 b_2\cdots b_{d(w)+1}}\}_{b_i\in\{0,1\}}$. To keep the previous local objectives unchanged, we require $\phi_{d(w)+1}^{\tilde{w}}({\mathcal{U}_{b_1 b_2\cdots b_{d(w)}1}})\cup\phi_{d(w)+1}^{\tilde{w}}({\mathcal{U}_{b_1 b_2\cdots b_{d(w)}0}})=\phi_{d(w)}^{{w}}({\mathcal{U}_{b_1 b_2\cdots b_{d(w)}}})$ for all $b_1,\ldots,b_{d(w)}\in\{0,1\}$. In other words, constructing $\phi_{d(w)+1}^{\tilde{w}}$ is equivalent to selecting a subset $\phi_{d(w)+1}^{\tilde{w}}({\mathcal{U}_{b_1 b_2\cdots b_{d(w)}1}})\subset\phi_{d(w)}^{{w}}({\mathcal{U}_{b_1 b_2\cdots b_{d(w)}}})$ with size $|\mathcal{U}_{b_1 b_2\cdots b_{d(w)}1}|$ and assigning the de-selected ones as $\phi_{d(w)+1}^{\tilde{w}}({\mathcal{U}_{b_1 b_2\cdots b_{d(w)}0}})$ for all $b_1,\ldots,b_{d(w)}\in\{0,1\}$.

    Furthermore, we can incrementally rank the child nodes of $w$ in the order of ascending $f_{d(w)+1}({\phi}_{d(w)+1}^w)$ in an online manner, i.e., without listing and sorting all possible child nodes. This, as shown later, in most cases avoids enumerating all possible branches when combined with the depth-first search procedure and the derived pruning criteria. 
    To achieve this is equivalent to finding the $j$-best solution for the following subcarrier selection problem:
    \begin{equation*}\text{(P4($w$))} 
            \begin{aligned}
            \min\limits_{\{{\phi_{d(w)+1}(\mathcal{U}_{b_1 b_2\cdots b_{d(w)}1})}\}} \sum_{m\in\bigcup_{b_j}{\phi_{d(w)+1}(\mathcal{U}_{b_1 b_2\cdots b_{d(w)}1})}}c_{j,m} \\
            \mathrm{s. t.}\quad 
             |\phi_{d(w)+1}(\mathcal{U}_{b_1 b_2\cdots b_{d(w)}1})|=|\mathcal{U}_{b_1 b_2\cdots b_{d(w)}1}|, \\\forall b_1,\ldots,b_{d(w)} \in \{0,1\},\\
             \phi_{d(w)+1}(\mathcal{U}_{b_1 b_2\cdots b_{d(w)}1})\subset \phi_{d(w)}(\mathcal{U}_{b_1 b_2\cdots b_{d(w)}}),\\\forall b_1,\ldots,b_{d(w)} \in \{0,1\}.
            \end{aligned}
    \end{equation*}
    Since the selection of each $\phi_{d(w)+1}(\mathcal{U}_{b_1 b_2\cdots b_{d(w)}1})$ is decoupled with each other, this can be achieved by standard algorithms such as priority queues. 
    
    The example of a search tree with number of agents $K=3$ is illustrated in Fig. 2, where each node with depth $d$ is marked with its corresponding local objective value $f_{d}$.

\section{Optimal VoCa-PPA: Fast Tree-Search}    
    Given the compact search tree constructed in the preceding section, we present in this section two novel algorithms to accelerate the tree search via node pruning and agent ordering by exploiting the properties of VoCa-PPA.
    \subsection{Tree-Pruning Algorithm}
    The search tree constructed in the preceding section systematically organizes all possible solutions to Problem P2, represented by all its leaf nodes. However, in practice, it is computationally prohibitive to store all tree nodes and then perform an exhaustive search for the optimal solution. To address this issue, we hereby introduce an efficient tree search method combining DFS and problem-specific pruning criteria.

    \subsubsection{Depth-First Search with Priority}
    The DFS starts with visiting the root node $w_0$, and repeats visiting an unvisited child node of the last visited node, thereby increasing the search depth, until reaching a leaf node with the maximum depth. When a leaf node is visited, or the node visited has no unvisited child node, the algorithm backtracks to visit its parent node. In particular, when visiting a node with multiple child nodes, the one with the minimum local objective value is always prioritized. Not only is it a greedy heuristic which minimizes the cost for the current agent considered, but such a priority order can facilitate node pruning discussed in the sequel to reduce the number of nodes visited. Moreover, using the said method of incrementally ranking the nodes, the unvisited child nodes with lower priorities need not be explicitly defined and stored but are instantiated per request.
    \subsubsection{Stopping and Pruning Conditions}
    The optimal solution of the search tree minimizes the global objective among solutions associated with all of the tree's leaf nodes. The stopping and pruning conditions in the process of DFS build on the observation that every local objective constitutes a lower bound of the original objective, i.e., $F\geq f_{j}$ for all $j = 1,\ldots,K$. Thus, the objective value achieved by all descendants of node $w$ is lower bounded by the single-objective value achieved by node $w$ itself, i.e., $f_{d(w)}(\phi^w_{d(w)})$. 
    By applying this argument to the child nodes of the root node $w_0$, i.e., nodes with depth $1$, we argue that any depth $1$ node, say $w_j$, cannot develop into a better solution than $\phi_{\mathrm{best}}$ if $f_{1}(\phi^{w_j}_1)\geq F(\phi_{\mathrm{best}})$, and neither can any sibling nodes with a larger index than $w_j$ as all child nodes are ranked in ascending order of the local objective. This results in the global optimal condition stated as follows. 
    \begin{Proposition}[\emph{Stopping Condition}]
    \label{theorem-optimality-condition}
    \emph{During a DFS over the defined search tree in Section~VI-C, the current best solution to Problem~P2, denoted as $\phi_{\mathrm{best}}$, is optimal if 
    \begin{equation}\label{optimality_inequality}
        f_{1}(\phi_1^{w_j})\geq F(\phi_{\mathrm{best}}),
    \end{equation}
    where $w_j$ is the last visited depth-$1$ node.}
    \end{Proposition}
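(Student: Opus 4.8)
The plan is to argue by branch-and-bound correctness: I would show that \emph{no} leaf node of the search tree can yield a global objective strictly smaller than $F(\phi_{\mathrm{best}})$, so the incumbent $\phi_{\mathrm{best}}$ must already be a global minimizer of Problem~P2. Since every candidate solution corresponds to a leaf node and every leaf node descends from exactly one depth-$1$ node $w_i$, I would partition the leaf nodes according to their depth-$1$ ancestor and bound each group separately. The two ingredients I would lean on are both available above: the \emph{lower-bound property} $F(\phi)=\max_k f_k(\phi)\geq f_1(\phi)$, combined with the fact that a depth-$1$ node fixes $f_1$ for its entire subtree, so that $f_1(\phi)=f_1(\phi_1^{w_i})$ for every descendant $\phi$ of $w_i$; and the DFS discipline, under which sibling child nodes are visited in ascending order of their local objective $f_1$ and each subtree is fully explored before the search backtracks to the next sibling.

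First I would treat the depth-$1$ nodes with index smaller than $j$. Because DFS completes a subtree before advancing to the next sibling and $w_j$ is the last visited depth-$1$ node, every leaf descending from $w_1,\ldots,w_{j-1}$ has already been evaluated. By definition the incumbent $\phi_{\mathrm{best}}$ is an examined solution achieving the minimum global objective over all solutions examined so far, hence $F(\phi_{\mathrm{best}})\leq F(\phi)$ for every such leaf $\phi$.

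Next I would treat the depth-$1$ nodes with index at least $j$, namely $w_j,w_{j+1},\ldots$, which the search has not yet fully explored. The ascending-order construction gives $f_1(\phi_1^{w_i})\geq f_1(\phi_1^{w_j})$ for all $i\geq j$. For any leaf $\phi$ descending from such a $w_i$, chaining the lower-bound property with the hypothesis \eqref{optimality_inequality} yields $F(\phi)\geq f_1(\phi)=f_1(\phi_1^{w_i})\geq f_1(\phi_1^{w_j})\geq F(\phi_{\mathrm{best}})$. Thus no descendant of $w_j$ (including $w_j$'s own as-yet-unexplored subtree) nor of any later sibling can beat the incumbent. Combining the two cases, every leaf node $\phi$ satisfies $F(\phi)\geq F(\phi_{\mathrm{best}})$, so the achievable solution $\phi_{\mathrm{best}}$ attains the global minimum and is optimal.

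The only delicate point, which I would state carefully rather than compute, is the bookkeeping that ties the DFS traversal order to the sibling ranking: one must verify that reaching $w_j$ as the ``last visited'' depth-$1$ node simultaneously certifies that all lower-index siblings are exhausted and that all higher-index siblings inherit the bound $f_1(\phi_1^{w_i})\geq f_1(\phi_1^{w_j})$. Once this is pinned down, the chain of inequalities closes the argument immediately, and I expect no deeper obstacle, since both supporting facts---the lower-bound property and the ascending ranking of child nodes---are already established in the tree construction of Section~VI-C.
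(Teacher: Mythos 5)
Your proposal is correct and follows essentially the same argument as the paper's proof: both partition the remaining leaf nodes by their depth-$1$ ancestor and apply the chain $F(\phi)\geq f_1(\phi)=f_1(\phi_1^{w_i})\geq f_1(\phi_1^{w_j})\geq F(\phi_{\mathrm{best}})$ using the lower-bound property and the ascending sibling ordering. Your explicit handling of the already-explored siblings $w_1,\ldots,w_{j-1}$ is a detail the paper leaves implicit in the definition of the incumbent, but the substance is identical.
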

    \begin{proof}
        According to the sequence of node visiting in DFS, any unvisited leaf node, say node $w'$, must be a child node of either $w_j$ or a sibling node of $w_j$, say $w_{j'}$, with $j'>j$. In the former case, we have its solution value $F(\phi_K^{w'})\geq f_{1}(\phi_1^{w_j})\geq F(\phi_{\mathrm{best}})$. In the latter case, we have $F(\phi_K^{w'})\geq f_{1}(\phi_1^{w_{j'}})\geq f_{1}(\phi_1^{w_{j}})\geq F(\phi_{\mathrm{best}})$ due to the ascending order of local objectives. This completes the proof. 
    \end{proof}
    
    In the searching process, the updating of $\phi_{\mathrm{best}}$ is triggered if a newly found leaf node outperforms the current best solution, and the optimality condition \eqref{optimality_inequality} is checked if $\phi_{\mathrm{best}}$ is updated or a new depth $1$ node is visited. 

    The stopping condition for the global optimum is derived by bounding the global objective with the local ones achieved by child nodes of the root node $w_0$. On the other hand, each node, say node $w$, is associated with a \emph{sub-tree} with itself being the root node. Similar to the original tree, define an objective function $F_w(\phi_K^w)\triangleq \max_{d=d(w)+1\ldots, K} f_{d}(\phi_K^w)$ of the sub-tree for a mapping $\phi_K^w$ associated with a leaf node, which considers only a subset of agents instead of all $K$ agents. The optimal solution of the said sub-tree is defined to minimize $F_w(\phi_K^w)$. Thus, a natural question is: \emph{can we generalize Proposition~\ref{theorem-optimality-condition} to the sub-trees to enable further node pruning?} This is justified by the intuition that in the search for the global optimum, it suffices to look at the optimum of a sub-tree instead of all solutions of the sub-tree since a global optimum must also be a local optimum. This is formalized in the following lemma, with its proof omitted for brevity.
    \begin{Lemma}
    \emph{Let $\phi_K^w$ be a solution associated with a leaf node $w$ of a sub-tree. Then, the optimal solution of the sub-tree $\phi_K^{w^*}$ is at least at good as $\phi_K^w$ in terms of the global objective function $F$, i.e., $F(\phi_K^{w^*})\leq F(\phi_K^w)$.}
    \end{Lemma}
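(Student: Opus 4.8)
The plan is to reduce the statement to the elementary monotonicity of the $\max$ operator, after isolating the only structural fact that genuinely uses the tree construction: within one sub-tree, the local objectives of the agents lying \emph{above} the sub-tree root are frozen. Fix the sub-tree in question to be rooted at a node $u$ and write $d\triangleq d(u)$, so that $w$ and $w^*$ are both leaf nodes descended from $u$ and each inherits the partial mapping $\phi_d^u$ at the root. First I would show, by induction on depth, that the local objectives $f_1,\ldots,f_d$ take identical values at every leaf of the sub-tree. The inductive step invokes the refinement constraint from the tree construction of the preceding section: at each branching step the image of a homogeneous-union subset is split between its two children while its union is preserved, so for every agent $j\le d$ the total set of subcarriers assigned to $j$'s non-sparse voxels is unchanged upon descending one level. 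By the local-objective formula \eqref{eqn: local_obj_rho_w}, each such $f_j$ therefore stays constant across the sub-tree, and I denote the resulting common value $A\triangleq\max_{1\le j\le d}f_j(\phi_d^u)$.

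Next I would record the decomposition of the global objective at an arbitrary leaf $\phi_K^{w'}$ of the sub-tree,
\begin{equation}
  F(\phi_K^{w'})=\max_{1\le j\le K}f_j(\phi_K^{w'})=\max\left\{A,\ \max_{d<j\le K}f_j(\phi_K^{w'})\right\}=\max\{A,\,F_u(\phi_K^{w'})\},
\end{equation}
where $F_u$ is the sub-tree objective defined earlier. By the definition of the sub-tree optimum, $\phi_K^{w^*}$ minimizes $F_u$ over all leaves of the sub-tree, so in particular $F_u(\phi_K^{w^*})\le F_u(\phi_K^{w})$. Since the map $t\mapsto\max\{A,t\}$ is nondecreasing, applying it to both sides gives $\max\{A,F_u(\phi_K^{w^*})\}\le\max\{A,F_u(\phi_K^{w})\}$, which is precisely $F(\phi_K^{w^*})\le F(\phi_K^{w})$ by the displayed decomposition. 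This closes the argument.

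The only step that is not a one-line manipulation is the invariance of $A$ across the sub-tree, so I expect that to be the main point to nail down; it is exactly where the branch-and-bound structure pays off, since freezing $f_1,\ldots,f_d$ is what makes the global objective separate into the constant $A$ and the sub-tree objective $F_u$, and everything downstream is then immediate from the monotonicity of $\max$. As a sanity check I would also note the degenerate case $u=w_0$ (the whole tree, $d=0$, so $A$ is absent and $F=F_u$), where the statement collapses to ``the global optimum is no worse than any leaf,'' consistent with Proposition~\ref{theorem-optimality-condition}.
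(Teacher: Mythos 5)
Your argument is correct and matches the reasoning the paper intends: the paper omits a formal proof and offers only the intuition that a global optimum within a sub-tree must be a sub-tree optimum, and your decomposition $F(\phi_K^{w'})=\max\{A,\,F_u(\phi_K^{w'})\}$ — justified by the construction's refinement constraint that freezes $f_1,\ldots,f_{d(u)}$ across all leaves of the sub-tree — is exactly the formalization of that intuition, with the conclusion following from monotonicity of $\max$. Nothing further is needed.
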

    Thus, the enumeration of the sub-tree's nodes can be stopped if its optimal solution is already found using a condition similar to \eqref{optimality_inequality}. The following proposition follows for pruning nodes which are unable to yield better solutions than visited nodes, with its proof omitted due to its similarity to that of Proposition~\ref{theorem-optimality-condition}.
    \begin{Proposition}\label{prop: pruning}(Pruning Criteria)
        \emph{During a DFS over the defined search tree in Section~VI-C, for a sub-tree associated with any node, say node $w$, its unvisited nodes can be pruned, i.e., need not be visited if 
        \begin{equation}
            f_{d(w)+1}(\phi_{d(w)+1}^{\tilde{w}_j})\geq F_w(\phi_{\mathrm{best},w}),
        \end{equation}
        where $\phi_{\mathrm{best},w}$ is the current best solution of the said sub-tree, and $\tilde{w}_j$ is the last visited child node of $w$.
        }
    \end{Proposition}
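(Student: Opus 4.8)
The plan is to mirror the proof of Proposition~\ref{theorem-optimality-condition}, which is exactly the special case $w=w_0$ of the present claim, and to bridge the gap between the sub-tree objective $F_w$ and the global objective $F$ using the preceding lemma. First I would fix the node $w$ at depth $d(w)$ and record the key decomposition of the global objective over the sub-tree rooted at $w$. Since the path from the root to $w$ freezes the local objectives $f_1,\ldots,f_{d(w)}$, the quantity $C_w\triangleq \max_{1\leq d\leq d(w)} f_d(\phi_{d(w)}^w)$ is common to every leaf $w'$ of the sub-tree, so that $F(\phi_K^{w'})=\max\{C_w, F_w(\phi_K^{w'})\}$. Because $t\mapsto\max\{C_w,t\}$ is non-decreasing, minimising $F_w$ over the sub-tree is equivalent to minimising $F$ over the sub-tree; this is precisely the content of the preceding lemma and is what lets us reason with $F_w$ alone.

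Next I would establish the lower-bound property that drives the pruning. For any child $\tilde{w}$ of $w$ (depth $d(w)+1$) and any leaf descendant $w'$ of $\tilde{w}$, the local objective of agent $d(w)+1$ is already frozen at $f_{d(w)+1}(\phi_{d(w)+1}^{\tilde{w}})$ by \eqref{eqn: local_obj_rho_w}, and this value is one of the terms in the maximum defining $F_w$. Hence $F_w(\phi_K^{w'})\geq f_{d(w)+1}(\phi_{d(w)+1}^{\tilde{w}})$; that is, each child's local objective lower-bounds the sub-tree objective of all its descendants.

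The core combinatorial step is then to account for which nodes remain unvisited. By the depth-first, ascending-priority traversal, once the child $\tilde{w}_j$ and its entire sub-tree have been explored, every still-unvisited leaf of the sub-tree rooted at $w$ descends from a sibling $\tilde{w}_{j'}$ with index $j'>j$. Because the children of $w$ are enumerated in non-decreasing order of $f_{d(w)+1}$ (the invariant of the incremental ranking in Problem~P4($w$)), we have $f_{d(w)+1}(\phi_{d(w)+1}^{\tilde{w}_{j'}})\geq f_{d(w)+1}(\phi_{d(w)+1}^{\tilde{w}_{j}})$. Chaining this with the lower-bound property and the pruning hypothesis gives, for every unvisited leaf $w'$,
\begin{equation*}
F_w(\phi_K^{w'})\geq f_{d(w)+1}(\phi_{d(w)+1}^{\tilde{w}_{j'}})\geq f_{d(w)+1}(\phi_{d(w)+1}^{\tilde{w}_{j}})\geq F_w(\phi_{\mathrm{best},w}).
\end{equation*}
Thus no unvisited leaf can beat $\phi_{\mathrm{best},w}$ in the sub-tree objective, so $\phi_{\mathrm{best},w}$ is sub-tree optimal; by the equivalence established in the first step it is also $F$-optimal among all leaves of the sub-tree, and discarding the unvisited nodes is therefore lossless.

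I expect the main obstacle to be book-keeping rather than analytic: pinning down precisely which nodes count as ``unvisited'' at the instant the condition is tested (formalising that a completed DFS of $\tilde{w}_j$ leaves only later-sibling sub-trees open), and making the reduction from $F$ to $F_w$ rigorous through the monotonicity of $\max\{C_w,\cdot\}$ together with the preceding lemma. The inequality chain itself is routine once the ascending-order invariant of the child enumeration is in hand.
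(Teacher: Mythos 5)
Your argument is correct and is essentially the proof the paper intends: the paper omits the proof of this proposition, noting only that it parallels that of Proposition~\ref{theorem-optimality-condition}, and your write-up is exactly that parallel argument lifted to the sub-tree rooted at $w$ (unvisited leaves descend from $\tilde{w}_j$ or a later sibling, the frozen local objective $f_{d(w)+1}$ lower-bounds $F_w$, and the ascending-order enumeration closes the chain), together with a correct justification of the bridging lemma via $F=\max\{C_w,F_w\}$ and monotonicity. No gaps.
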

    A direct result follows: for a node $w$ with depth $K-1$, i.e., whose child nodes are leaf nodes of the search tree, upon visiting its leftmost child node, the remaining child node can be immediately pruned as the objective function of the sub-tree, $F_w$ is exactly $f_{K}$, giving $F_w(\phi_{\mathrm{best},w})=f_{K}(\phi_K^{w_1})$ which triggers Proposition~2. An example of tree search with pruning is illustrated in Fig.~\ref{fig_tree_search}, where nodes pruned according to Proposition~2 are marked with strides and Proposition~\ref{theorem-optimality-condition} is used for the optimality test.

    \subsection{Agent-Ordering Algorithm}\label{subsec: sensor-order}
    Determining the agent ordering can significantly affect the number of nodes visited before the algorithm finds the optimal solution. Selecting the agent order can be translated to determining the agent priority. The agents can be arranged in the descending order of their priorities. To this end, agent $1$ is given the highest priority because in the DFS process, the first child node we visit minimizes the objective $f_{1}$. Conditioned on $f_{1}$, we proceed to minimize the objective for agent $2$ with the second highest priority, and so forth. Following the intuition that the bottleneck agent should be given high priority, we propose an ordering heuristic based on a priority indicator, which is defined for each agent, say agent $i$, as 
    \begin{equation}
        \psi(i) = f^*_{i} =  \min_{|\mathcal{M}|=|\mathcal{V}_i|} \sum_{m\in \mathcal{M}} c_{i,m}.
    \end{equation}
    This indicator can be interpreted as the cost of the locally optimal mapping between non-sparse voxels and subcarriers for agent $i$ without considering other agents. A larger $\psi(i)$ indicates poorer channel states or more non-sparse voxels that need to be transmitted for agent $i$. From the tree-searching perspective, $\psi(i)$ is the objective lower bound obtained by visiting the very first child node of the root node if agent $i$ is visited first (see Proposition~\ref{theorem-optimality-condition}). As a result, letting agent $1$ be the one with the highest priority indicator yields the tightest initial lower bound. Thus we propose to arrange the agent index in descending order of the priority indicator, i.e., assigning index such that $\psi(i)\geq \psi(i')$ for any $1\leq i \leq i' \leq K$. 
    
    \subsection{Fast Tree-Search Algorithm}
    The fast tree search for optimal VoCa-PPA (i.e., solving Problem P2), which incorporates the two algorithms in the preceding subsections, is summarized in Algorithm~\ref{algo: optimal_pairing}.

    {
    \subsection{Complexity Analysis}
     The computation complexity of visiting each node by Algorithm~\ref{algo: optimal_pairing} in the defined tree can be divided into that of 1) local objective evaluation by \eqref{eqn: local_obj_rho_w}, which is $\mathcal{O}(M)$; 2) pruning/stopping determination by Proposition~\ref{theorem-optimality-condition} and Proposition~\ref{prop: pruning}, which is $\mathcal{O}(1)$; 3) enumeration of child nodes in ascending order of local objectives, which in the worst case $\mathcal{O}(N_\text{child}\log N_\text{child})$, where $N_\text{child}$ is the number of child nodes. Note that the last term is amortized by all $N_\text{child}$ nodes, and thus the amortized complexity per node is in fact upper bounded as $\mathcal{O}(\log N_\text{child})<\mathcal{O}(\log M!)=\mathcal{O}(M\log M)$. Meanwhile, the total number of visited nodes is upper bounded by $KN_\text{sol}$, where $N_\text{sol}$ is the number of solutions (leaf nodes) enumerated before the algorithm stops. Therefore, the worst case complexity of Algorithm~\ref{algo: optimal_pairing} is $\mathcal{O}(N_\text{sol}KM\log M)$. In the worst case, $N_\text{sol}$ can still reach the size of the full solution space, which is exponential in $M$. This is inevitable due to the NP-hardness of Problem P2. However, the empirical number of solutions visited is usually substantially lower than the worst case, thanks to the proposed fast tree-search algorithms. To illustrate the empirical complexity, $95$-th percentiles of $N_\text{sol}$ under different $M$ and $K=4$ are presented in Table~1 along with the solution space size, i.e., the number of all possible solutions. } 

\begin{table}[t]
{
\caption{Numbers of Enumerated Solutions and All Solutions}
\begin{tabular}{ccc}
\toprule
$M$  & 95-th percentile of $N_\text{sol}$ & Number of all solutions \\ \midrule
$8$  & $12$                               & $1.10\times 10^{4}$       \\
$16$ & $99$                               & $1.20\times 10^{11}$      \\
$32$ & $653$                              & $1.78\times 10^{27}$     \\ \bottomrule
\end{tabular}}
\end{table}
      
\begin{algorithm}[ht]
\caption{Fast Tree Search for Optimal VoCa-PPA}
\label{algo: optimal_pairing}
\textbf{Input:} Sparsity matrix $\mathbf{S}$ and channel matrix $\mathbf{H}$\;
\textbf{Prioritization:} Determine the agent indexing as elaborated in Section~\ref{subsec: sensor-order}\;
\textbf{Initialization:} Root node $w_0$ with $d(w_0)=0$\;
$\phi^*$ = DFS($w_0$)\;
Designate the optimal mapping $m^*(v)$ as an arbitrary one that satisfies $\phi^*$\;
Recover $\mathbf{A}^*$ from $m^*(v)$\;
$\,$\\
\textbf{function} DFS(w)\\
\ \ \ \ \textbf{for} node $\hat{w}$ \textbf{in} all non-root parent nodes of $w$ \textbf{do}\\
\ \ \ \ \ \ \ \ Invoke Proposition~2 to prune all unvisited nodes of $w$ if possible\;
\ \ \ \ \textbf{if} optimality test passes via Proposition~1 \textbf{then}\\
\ \ \ \ \ \ \ \ $\phi^*\leftarrow$ current best solution of the full tree\;
\ \ \ \ \ \ \ \ \textbf{return} optimal solution $\phi^*$\\
\ \ \ \ \textbf{if} $d(w)<K$ \textbf{then} \\
\ \ \ \ \ \ \ \ \textbf{while} $w$ has unvisited child nodes \textbf{do}\\
\ \ \ \ \ \ \ \ \ \ \ \ Create child node $\tilde{w}$ with $d(\tilde{w})=d(w)+1$\;
\ \ \ \ \ \ \ \ \ \ \ \ $\phi_{d(w)+1}^{\tilde{w}} \leftarrow$ the next best solution to  P4($w$)\;
\ \ \ \ \ \ \ \ \ \ \ \ Call DFS($\tilde{w}$)\;
\ \ \ \ \ \ \ \ \ \ \ \ \textbf{if} DFS($\tilde{w}$) returns optimal solution $\phi^*$ \textbf{then}\\
\ \ \ \ \ \ \ \ \ \ \ \ \ \ \ \ \textbf{return} optimal solution $\phi^*$\;
\ \ \ \ \textbf{return} continue search \\
\textbf{end function}
\end{algorithm}

\section{Experimental Results}
\label{sec: performance-evaluation}
\subsection{Experimental Settings}
We evaluate the performance of Spatial AirFusion on an ISEA system as illustrated in Fig.~\ref{fig_system}.  The channel between the fusion center and $K$ agents is assumed to follow i.i.d. Rician fading with the ratio between the power of line-of-sight (LoS) and non-LoS paths set as 3 dB and the path loss set as -15 dB. Following the Wi-Fi 6E standard, the total number of subcarriers in each resource block is $M=26$, each spanning a bandwidth of $B_\mathrm{sub}=120\ \text{kHz}$. The receive noise power per subcarrier is set as $-40$ dBm. Average pooling is adopted as the fusion function. 
The performance of Spatial AirFusion and baseline schemes is evaluated on the following two datasets.
\begin{itemize}
    \item \textbf{Synthetic dataset.} The synthetic dataset involves $K=4$ agents, each with a randomly generated feature map. The feature sparsity pattern $\bS$ is a random binary matrix with $1/3$ probability for each of its elements to be non-zero, while it is ensured that each column has at least one non-zero element, i.e., at least one agent has non-zero observations on each voxel. The simulated performance is averaged over $1000$ realizations with i.i.d. randomly generated channel matrices and sparsity patterns.
    \item \textbf{OPV2V dataset.} The OPV2V dataset \cite{xu2022opencood} considers a vehicle-to-vehicle communication scenario where an ego vehicle fuses sensory features from helping vehicles detect other vehicles in a traffic scene. A data frame involves two to five vehicles, one of which is selected as the ego vehicle. Each vehicle captures a LiDAR point cloud of the surrounding environment and objects, which is projected onto the ego vehicle's coordinates and processed by a PointPillar backbone into a two-dimensional local spatial feature map with $V_\mathsf{h}=256$ and $V_\mathsf{w}=352$ being the number of voxels along the height and width of the perception region, respectively. Each voxel is associated with a feature vector with dimension $L=128$, and thus the size of the local feature map is $128\times 256 \times 352$. {We find that in all voxels observed by all agents, over 90\% are empty, resulting in zero feature vectors, which conforms to the observations by \cite{Zhou2018CVPR, pointpillars}. }Therein, $50591$ out of $V_\mathsf{h}V_\mathsf{w} = 90112$ voxels are empty over all samples and all agents in the dataset, regarded as dummy, and waived of transmission for all evaluated methods. {The ego vehicle wirelessly aggregates the feature map from all other vehicles and inputs the fused feature map into an RPN, as in \cite{fcooper}, to obtain the vehicle detection result.  The detection performance is evaluated by comparing the downstream network output with the ground truth, measured by }the average precision (AP) at an Intersection over Union (IoU) threshold of 0.7. {It is defined as the area under the precision-recall curve resulting from the said detection model, where a detected bounding box is considered true-positive if it overlaps with a ground-truth bounding box with an IoU higher than $0.7$\cite{Zhou2018CVPR}.}
\end{itemize}
\begin{figure*}[t]
\centering
\subfigure[]{\includegraphics[height=4.6cm]{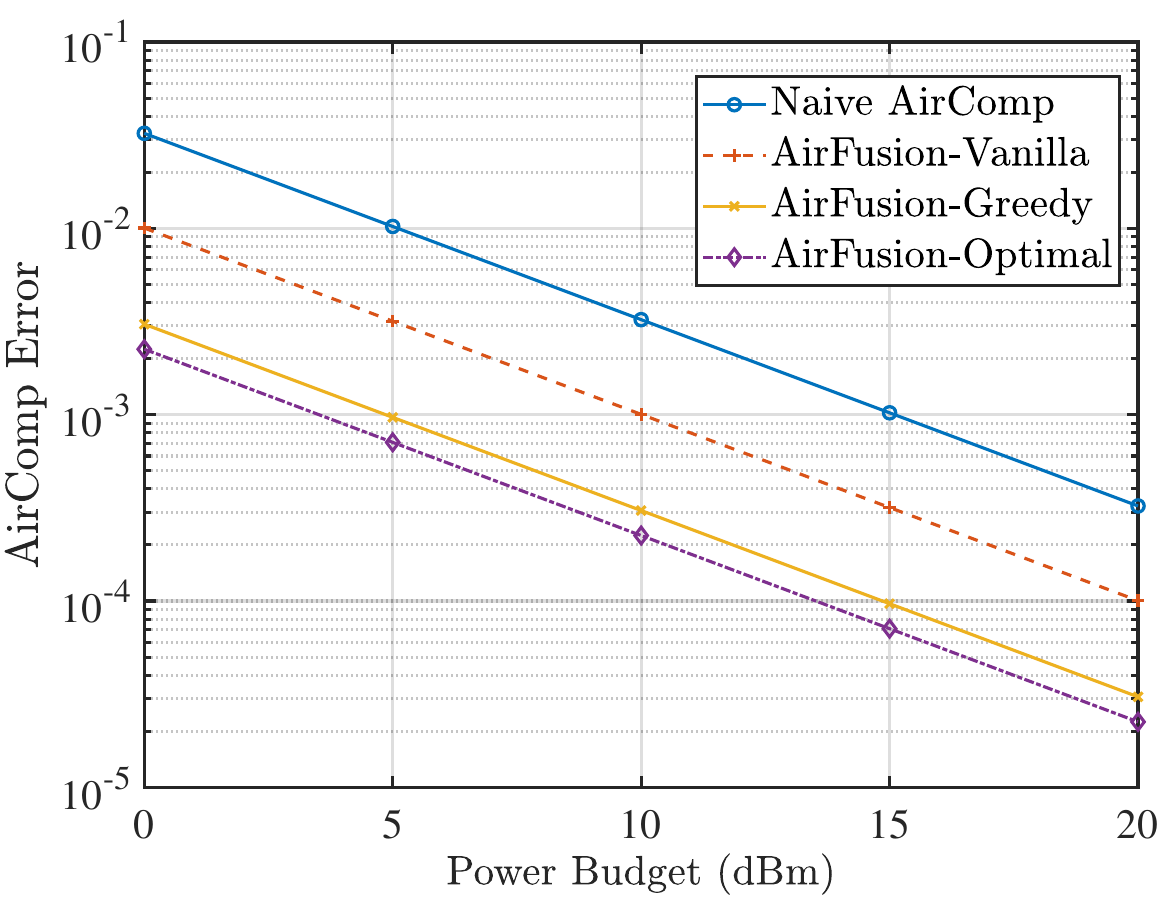}\label{fig_syn_tx_power}}
\hspace{2.8cm}
\subfigure[]{\includegraphics[height=4.6cm]{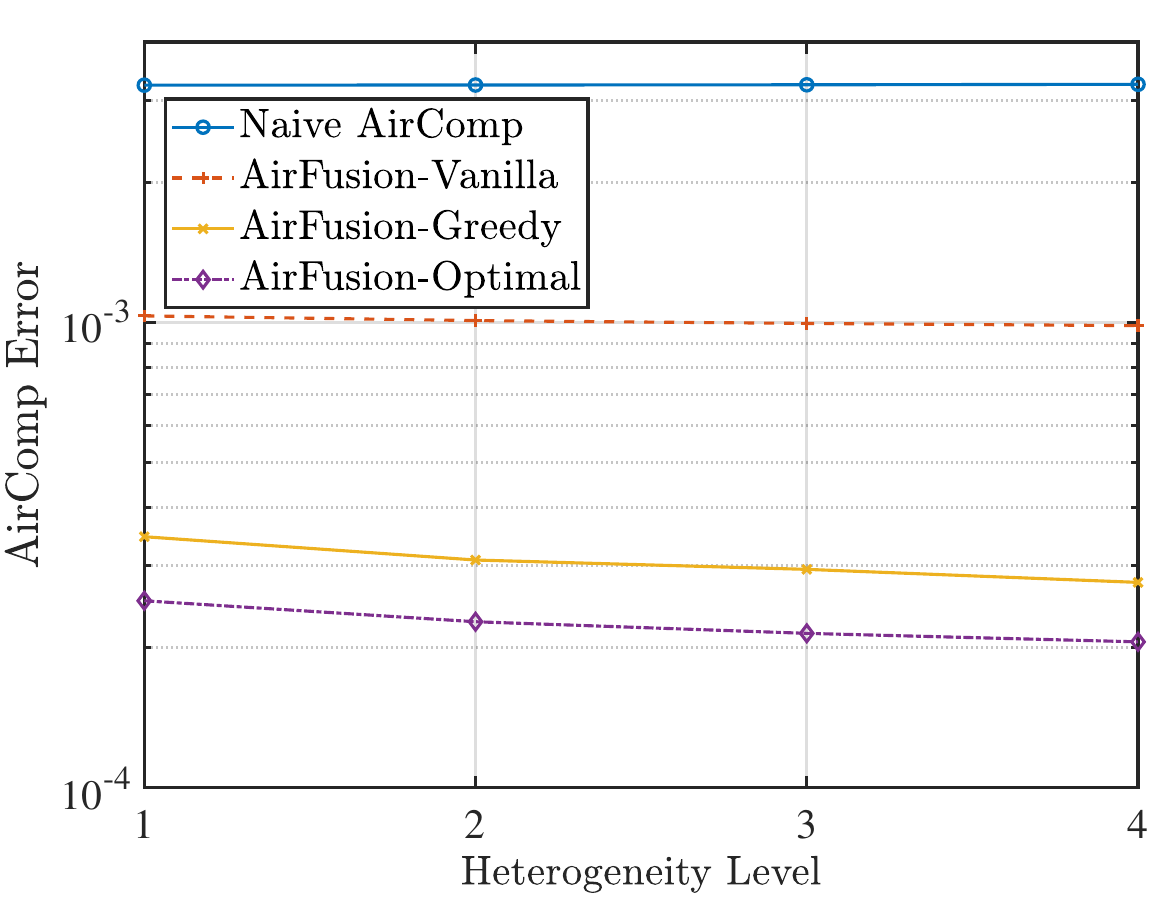}\label{fig_syn_hetero_level}}
\vspace{-0.2cm}
\caption{The performance of variants of Spatial AirFusion and naive AirComp on the synthetic dataset. }
\label{fig_syn}
\end{figure*}
We compare the performance of Spatial AirFusion controlled by VoCa-PPA with three benchmarking schemes, called \emph{naive AirComp}, \emph{digital air interface}, and \emph{AirFusion-Vanilla}. 
\begin{itemize}
    \item \textbf{Naive AirComp.} Naive AirComp aggregates each voxel over the air on an assigned subcarrier similar to Spatial AirFusion, but does not involve the feedback of the feature sparsity matrix. Thus, all agents  participate in  AirComp over all subcarriers regardless of sparsity\cite{GX2020TWC}. The subcarriers are allocated in sequential order and the receive SNR, which is fixed for all subcarriers in each coherence block, is chosen such that all agents' power constraints are satisfied. 
    \item \textbf{Digital air interface.} The scheme corresponds to the conventional digital broadband orthogonal-access approach, {where each agent is assigned a subset of subcarriers for feature uploading.} On the agent side, each feature coefficient {is encoded into $2$ to $5$ bits, depending on the desired latency-precision tradeoff, via uniform quantization. The radio resource management scheme with max-marginal-rate subcarrier assignment and equal power allocation, proposed and shown to be near-optimal in\cite{1437359}, is adopted. Then the communication latency is calculated using Shannon capacity given the assigned subcarrier and power.}  After receiving data from all agents, the server decodes the bits stream to reconstruct features. 
    \item {\textbf{AirFusion-Vanilla.} This scheme implements the system architecture and operations of AirFusion as in Section~II and Section~III, but pairs voxels with subcarriers in a sequential order without optimization. Given the default pairing, the power is optimally allocated using Lemma~1.}
\end{itemize}
\begin{figure*}[t]
\centering
\subfigure[ ]{\includegraphics[height=4.6cm]{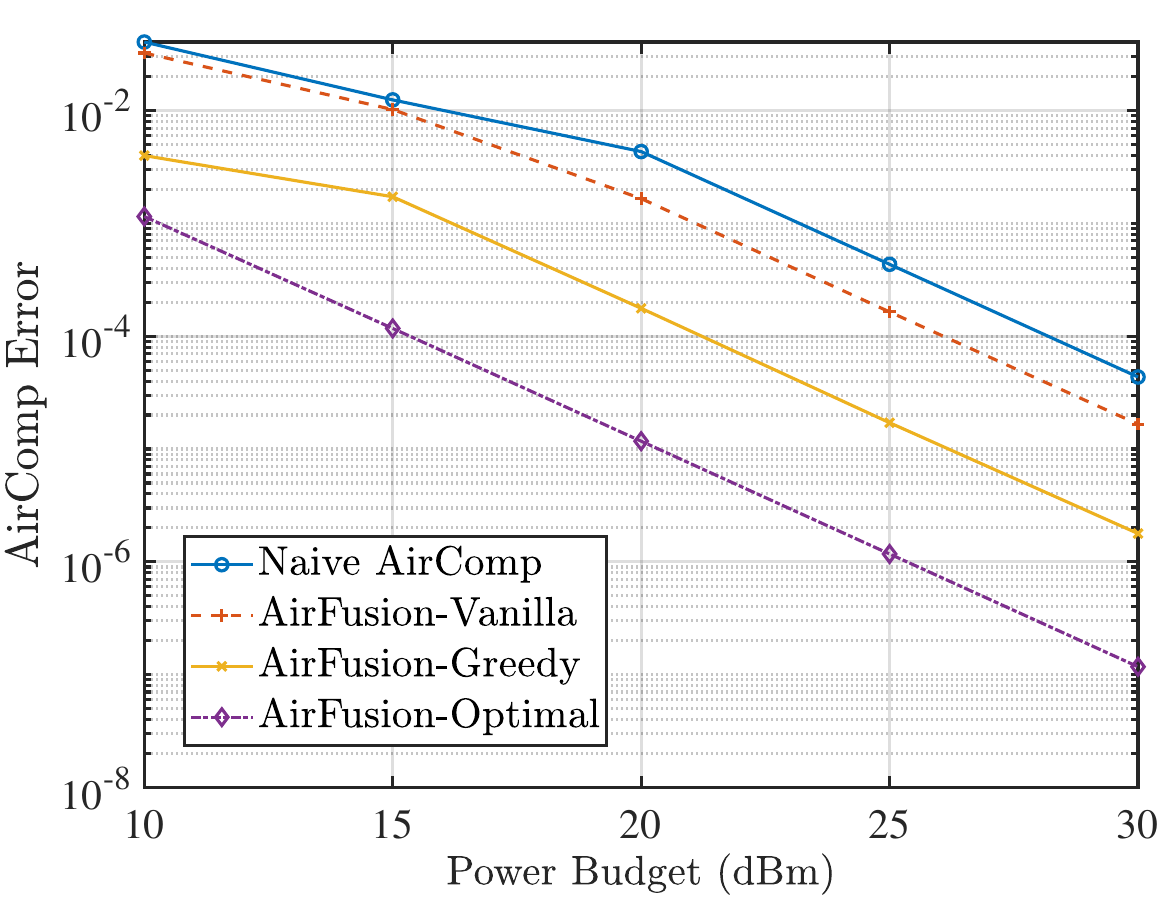}\label{fig_opencood_3_err}}
\hspace{2.9cm}
\subfigure[ ]{\includegraphics[height=4.6cm]{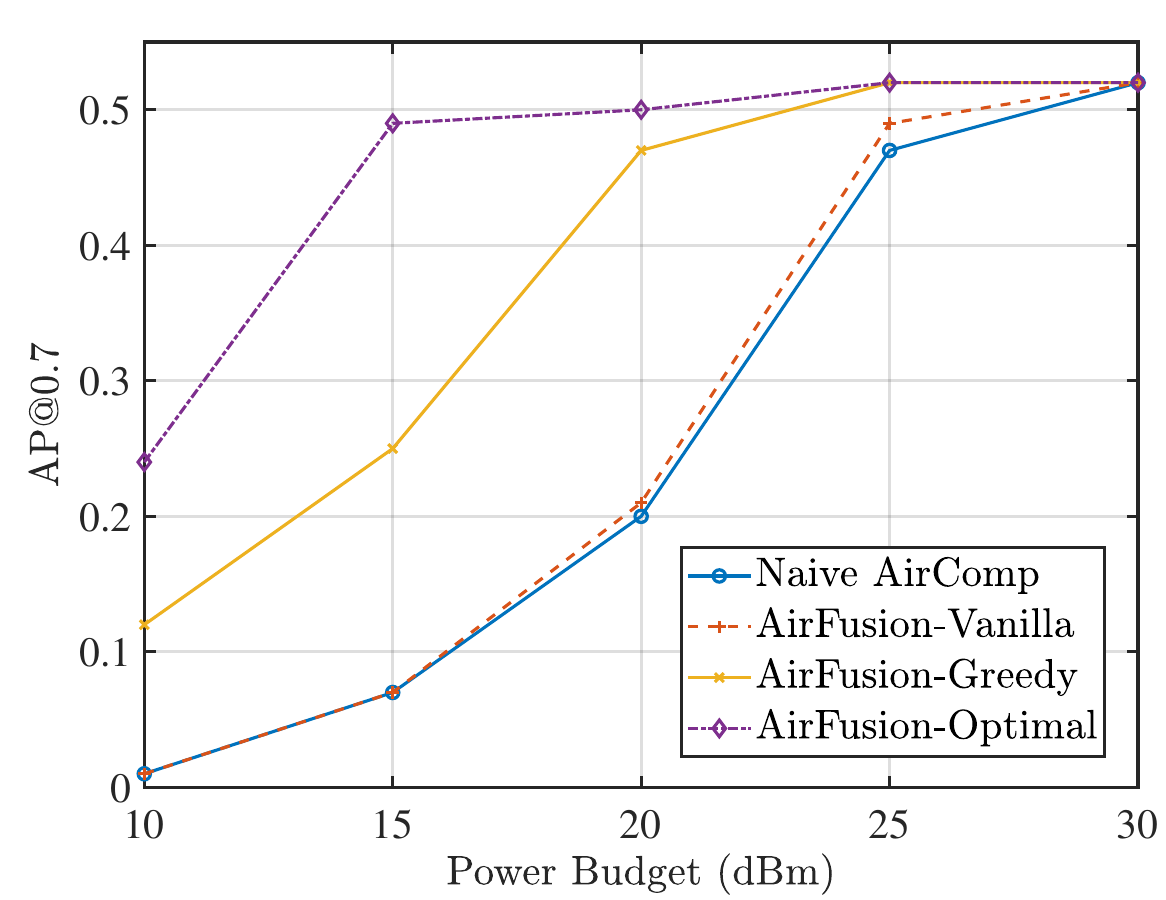}\label{fig_opencood_3_acc}}
\vspace{-0.2cm}
\caption{The performance of variants of Spatial AirFusion and naive AirComp on the OPV2V dataset with number of CAVs $K=3$.}
\label{fig_opencood_3}
\end{figure*}

\begin{figure*}[t]
\centering
\subfigure[ ]{\includegraphics[height=4.6cm]{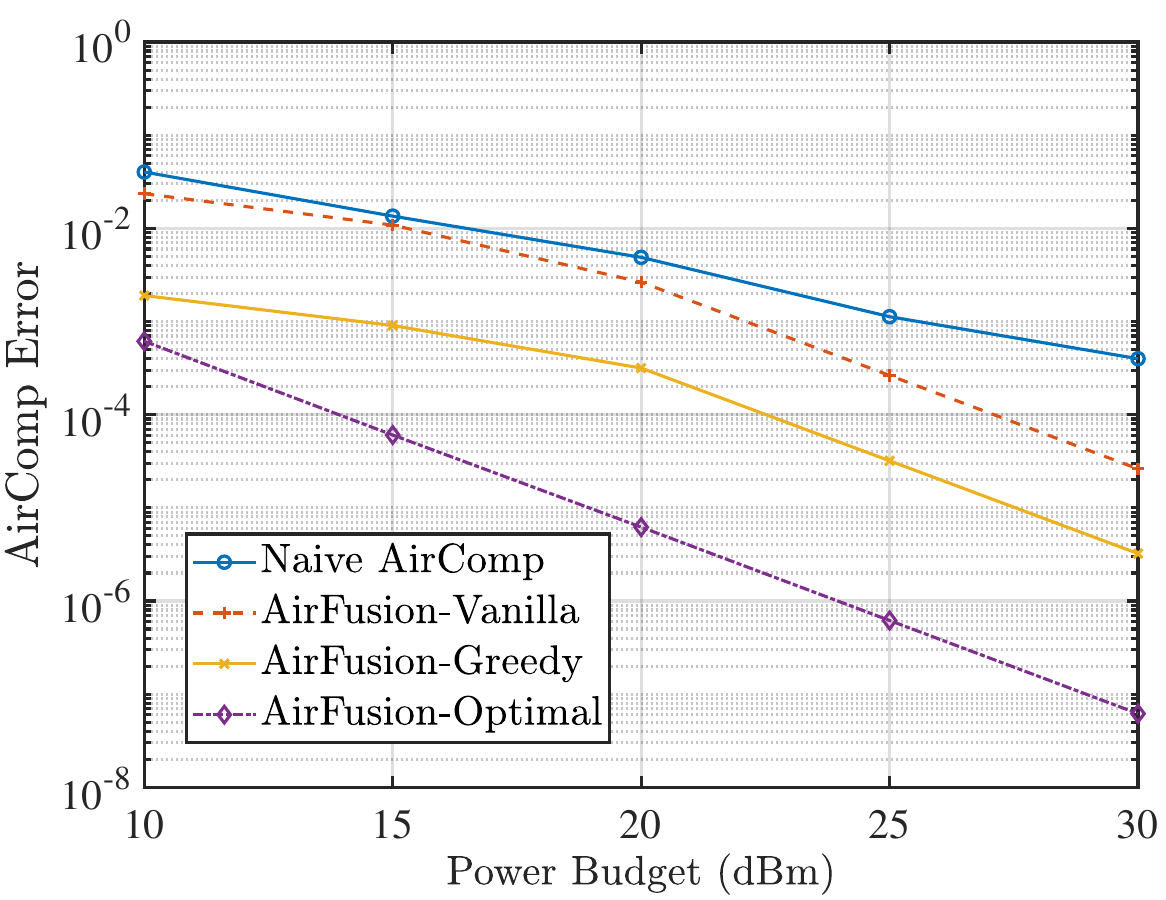}\label{fig_opencood_4_err}}
\hspace{2.9cm}
\subfigure[ ]{\includegraphics[height=4.6cm]{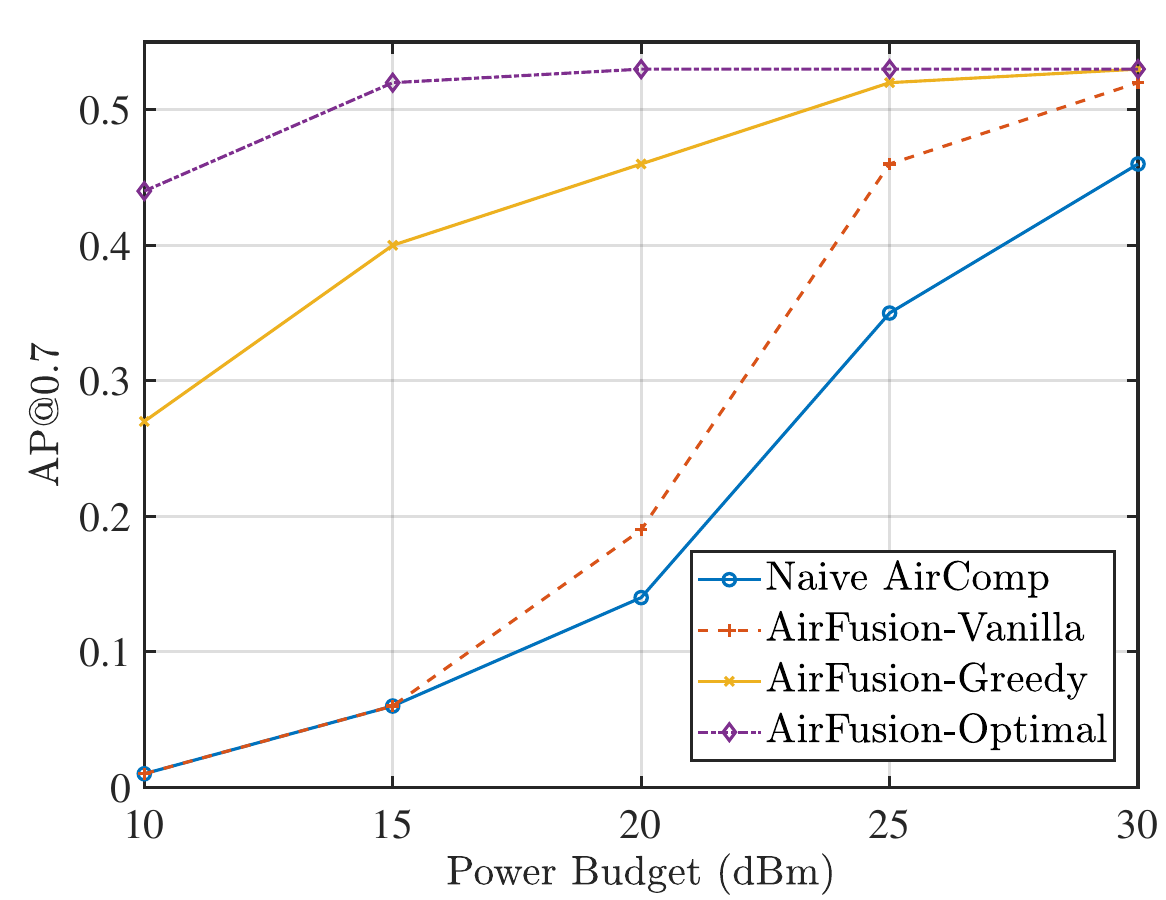}\label{fig_opencood_4_acc}}
\vspace{-0.2cm}
\caption{The performance of variants of Spatial AirFusion and naive AirComp on the OPV2V dataset with number of CAVs $K=4$.}
\label{fig_opencood_4}
\end{figure*}
\subsection{Performance Evaluation on Synthetic Datasets}

First, the performance of Spatial AirFusion and naive AirComp is evaluated on the synthetic dataset. {We test Spatial AirFusion controlled by Greedy VoCa-PPA in Algorithm~\ref{algo: greedy_control} and Optimal VoCa-PPA in Algorithm~\ref{algo: optimal_pairing}, termed ``AirFusion-Greedy'' and ``AirFusion-Optimal'', respectively. The performance is measured by AirComp error, defined as the mean square error of feature aggregation results compared with the ideal ground-truth case, i.e., \eqref{eqn: fusion-ground-truth}}. The curves of AirComp error versus transmit power budget on each agent are plotted in Fig.~\ref{fig_syn_tx_power}. We observe that the sparsity-aware Spatial AirFusion protocol design can roughly reduce the AirComp error by 70\% with AirFusion-Vanilla which does not optimize subcarrier allocation. This can be attributed to the reduction in communication overhead combined with smarter power allocation by exploiting sparsity of spatial features. On top of vanilla Spatial AirFusion, incorporating optimal VoCa pairing further improves the Spatial AirFusion performance as observed from the greedy and optimal cases. The small optimality gap between the algorithms renders greedy VoCa-PPA a close-to-optimal heuristic with low computational complexity.

{Fixing the transmit power budget at 10 dBm, we vary the sparsity heterogeneity measured by the entropy of the empirical distribution of homogeneous subsets, as given by $-\sum_{q=1}^{2^K}\frac{|\mathcal{H}^q|}{V}\log\frac{|\mathcal{H}^q|}{V}$. It reaches the maximum when voxels are uniformly distributed to all homogenous subsets and zero when all voxels belong to the same homogeneous subset.} The AirComp error performance against heterogeneity level is plotted in Fig.~\ref{fig_syn_hetero_level}. We find a reduction in AirComp error when the heterogeneity level increases for Spatial AirFusion but not for naive AirComp that does not exploit spatial sparsity. The reason is that with a more heterogeneous voxel distribution, the proposed framework is provisioned with more degrees-of-freedom for VoCa pairing. {This aligns with the intuition that in the extreme case where all voxels belong to the same homogeneous subset, the gain of the proposed approach diminishes since the homogeneity of voxels renders arbitrary VoCa allocation optimal.}

\subsection{Performance Evaluation on the OPV2V dataset}

The experimental results of Spatial AirFusion and naive AirComp obtained on the OPV2V dataset are presented in Fig.~\ref{fig_opencood_3}. The curves of AirComp error versus power budget for 3 and 4 participating vehicles, as plotted in Figs.~\ref{fig_opencood_3_err}~and~\ref{fig_opencood_4_err}, respectively, show a trend similar to that on the synthetic dataset where Spatial AirFusion significantly outperforms naive AirComp. In terms of inference accuracy shown in Figs.~\ref{fig_opencood_3_acc}~and~\ref{fig_opencood_4_acc}, which is measured by the average precision at an intersection over union (IoU) threshold of 0.7, Spatial AirFusion delivers substantially better performance than naive AirComp. As the transmit power budget reaches 20 dBm, the accuracy of Spatial AirFusion saturates at about 50\%, which is due to the inherent robustness of the perception model that tolerates a certain amount of distortion in the aggregated features without losing accuracy.
    \begin{figure}[t]
    \centering
\includegraphics[height=4.6cm]{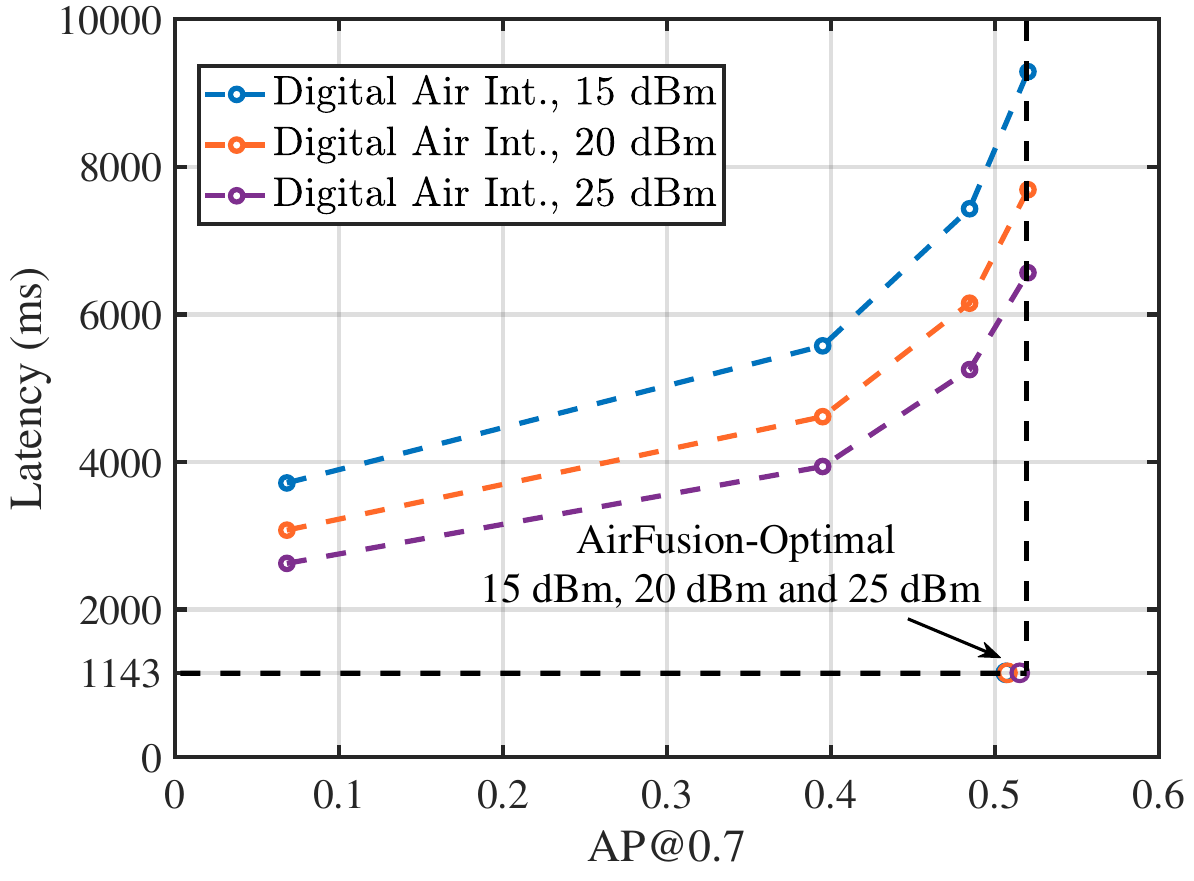}
    \vspace{-0.2cm}
    \caption{The tradeoff between communication latency (in \emph{millisecond} (ms)) and sensing performance measured in AP@0.7 for Spatial AirFusion and digital air interface different transmit power budgets on the OPV2V dataset. }
    \label{fig_dai_AirFusion}
\end{figure}

Finally, we compare the Pareto fronts of latency-precision tradeoff for digital air interface and Spatial AirFusion. The communication latency is defined as the average transmission time required to aggregate all features of a single {perception instance}. For AirComp, the said latency is independent of the transmit power and given by $L_\mathrm{H}=L N_\mathrm{v}/B_\mathrm{sub}$, where $\tilde{N}_\mathrm{v}$ is the average number of non-sparse voxels in each LiDAR frame. For digital air interface, the latency $L_\mathrm{D}$ depends on the total number of OFDM rounds required to transmit all features. Therein, a lower transmit power budget or poorer channels lead to lower communication rates and thus more required rounds. Given a certain transmit power budget, the latency-precision tradeoff in digital air interface is regulated by feature quantization resolution varying from 2 bits to 5 bits. The results are plotted in Fig.~\ref{fig_dai_AirFusion}. We observe that under the same precision requirement and transmit power budget, Spatial AirFusion can reduce the latency by up to an order of magnitude. For example, digital air interface requires $5$-bit quantization to achieve a target precision of $51\%$ at $25$ dBm power budget, where the resultant latency is $6,565$ ms. In contrast, Spatial AirFusion completes transmission in only $1,143$ ms, reducing the latency by $5.74$ times. Two factors contribute to the latency reduction. The first is the exploitation of waveform superposition to avoid orthogonal transmission of each agent's feature. Second, through the sparsity pattern feedback, a substantial number of sparse voxels need not be transmitted in the case of Spatial AirFusion. 
\vspace{-0.2cm}
\section{Concluding Remarks}
\label{sec: conclusions}
In this paper, we have presented the framework of Spatial AirFusion, a broadband task-oriented air interface targeting multi-agent environment perception tasks. The Spatial AirFusion protocol is developed to exploit spatial feature sparsity, a critical property of perception models, for enhancing communication efficiency. A mixed-integer programming problem, i.e., the VoCa-PPA problem, is formulated for joint allocation of power and subcarriers to maximize the minimum received SNR among all voxels. We solve this problem by designing a low-complexity greedy VoCa pairing algorithm and also an optimal tree search approach via exploiting useful properties of the problem structure. Experimental results show significant improvement in error suppression, sensing performance, and latency reduction compared with conventional approaches.

{ We acknowledge that several assumptions have been made in this paper to simplify the exposition, which motivate further studies. Identical feature variance is assumed across all agents, the relaxation of which requires feature statistics-aware power control. Symbol-level synchronization and sufficient channel coherence time across all agents are assumed to facilitate AirComp design. However, under high mobility, such conditions may not hold, requiring advanced scheduling and air-interface designs.}

This work opens up several research directions on task-oriented communication schemes for ISEA. For example, digital Spatial AirFusion can be developed for better compatibility with existing digital systems, enabling incorporation of digital transmission techniques such as modulation and coding schemes. Another interesting topic is the interplay between Spatial AirFusion and more sophisticated physical layer techniques such as MIMO. { In existence of strong resource heterogeneity across agents, asynchronous feature aggregation could be necessitated, and the relevant scheduling and fusion schemes warrant future studies.} In addition, integrating Spatial AirFusion with semantic data sourcing, which broadcasts low-dimensional queries to trigger transmission on semantically relevant agents, can further reduce communication cost\cite{SEMDAS}. 
\vspace{-0.2cm}
\bibliographystyle{IEEEtran}
\bibliography{Edge-Inference}
\begin{IEEEbiography}
[{\includegraphics[width=1in,height=1.25in,clip,keepaspectratio]{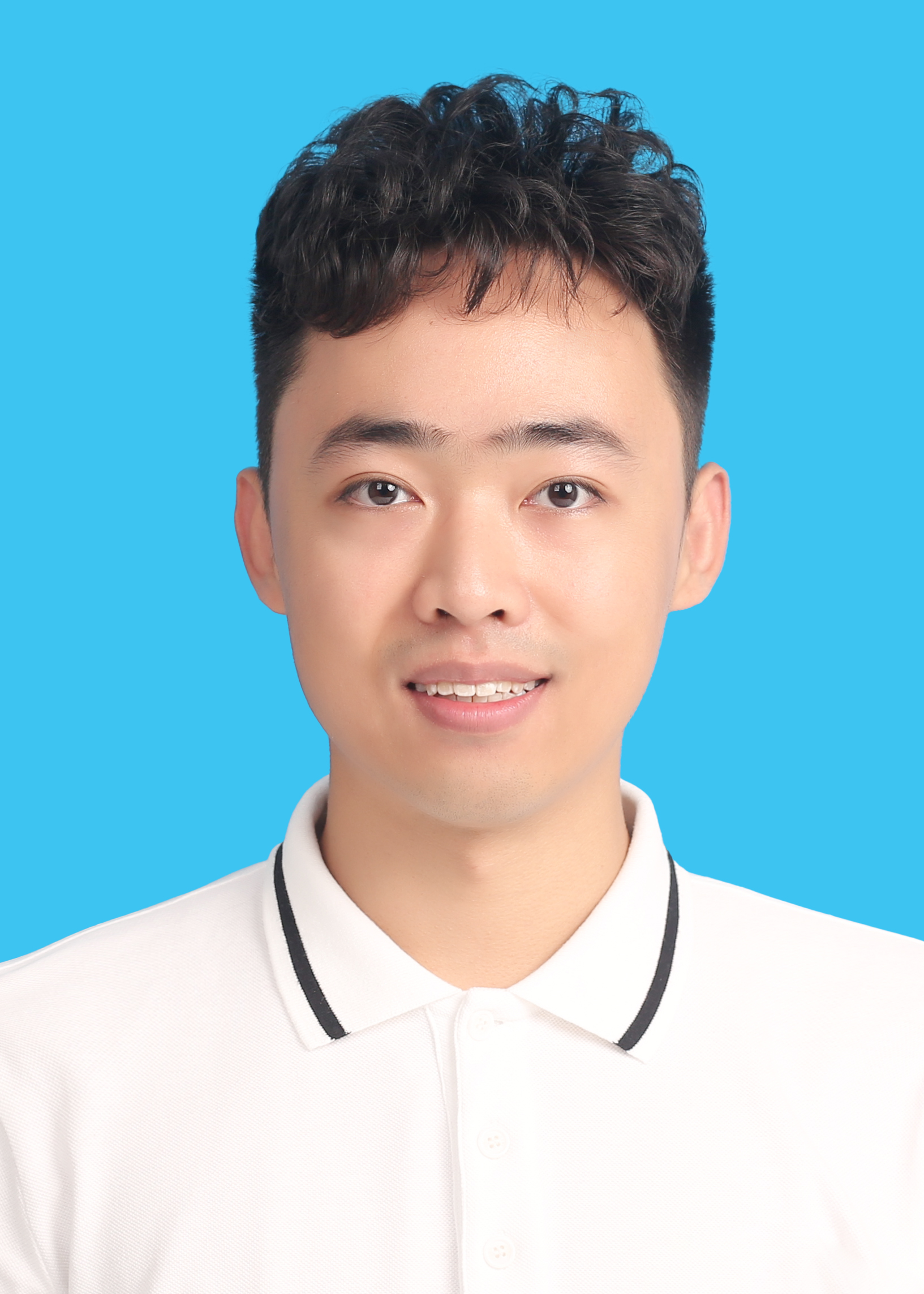}}]{Zhiyan Liu} (Graduate Student Member, IEEE) received the B.Eng. degree from the Dept. of Electronic Engineering, Tsinghua University, Beijing, in 2021. He is currently working towards the Ph.D. degree with Dept. of Electrical and Electronic Engineering, The University of Hong Kong (HKU), Hong Kong. His recent research interests include edge intelligence and distributed sensing in 6G wireless networks. He was a recipient of Hong Kong Ph.D. Fellowship.
\end{IEEEbiography}

\begin{IEEEbiography}
[{\includegraphics[width=1in,height=1.25in,clip,keepaspectratio]{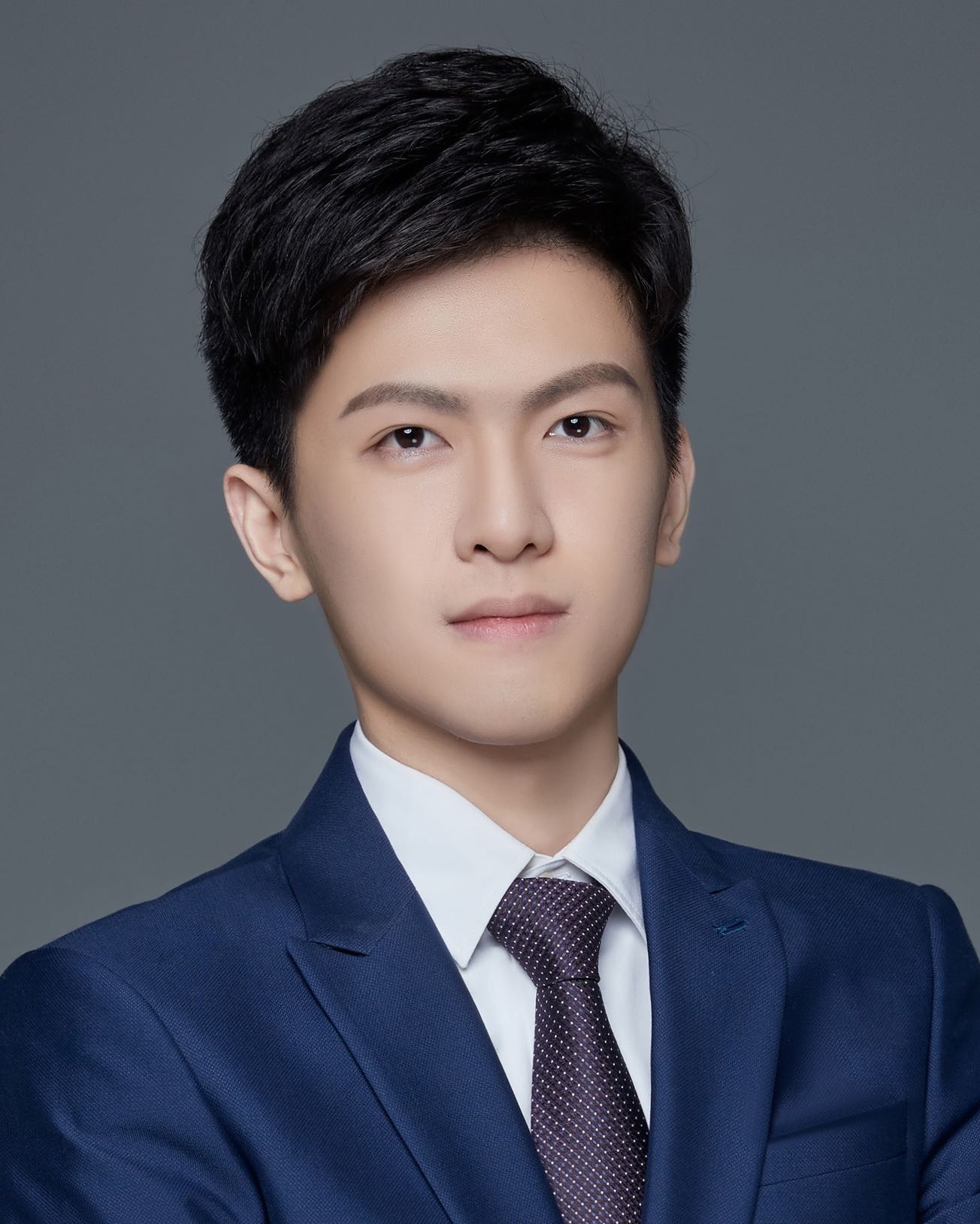}}]{Qiao Lan} (Member, IEEE) received the B.Eng. degree (with honors) from the Southern University of Science and Technology, Shenzhen, in 2019, and the Ph.D. degree from The University of Hong Kong, Hong Kong, in 2023. He is now a senior engineer with a research laboratory in the wireless communications industry. His recent research interests include AI algorithms and systems in wireless networks. 
\end{IEEEbiography}

\begin{IEEEbiography}
[{\includegraphics[width=1in,height=1.25in,clip,keepaspectratio]{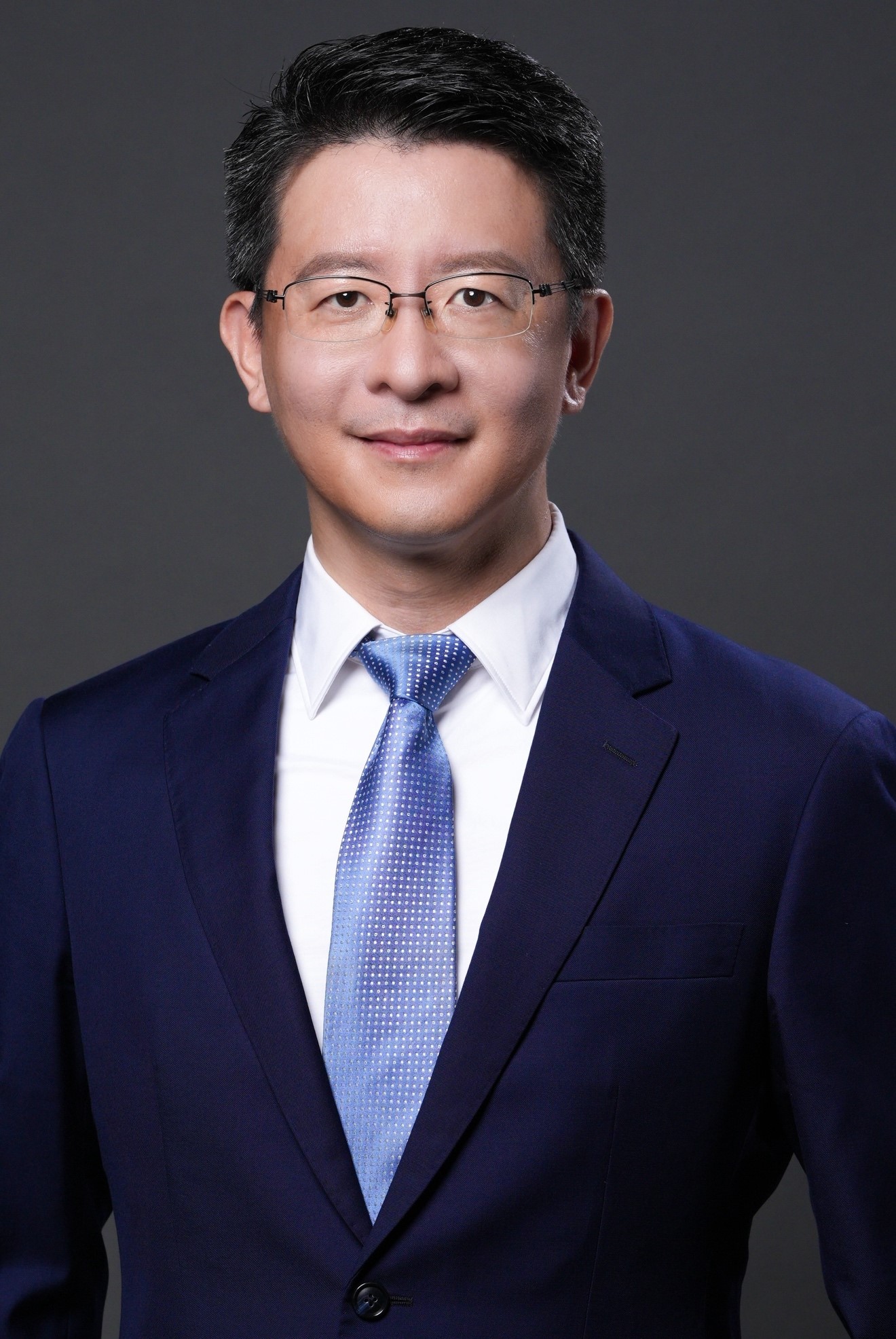}}]{Kaibin Huang} (Fellow, IEEE) received the B.Eng. and M.Eng. degrees from the National University of Singapore and the Ph.D. degree from The University of Texas at Austin, all in electrical engineering. He is a Professor and the Head at the Dept. of Electrical and Electronic Engineering, The University of Hong Kong (HKU), Hong Kong. He received the IEEE Communication Society’s 2021 Best Survey Paper, 2019 Best Tutorial Paper, 2019 and 2023 Asia–Pacific Outstanding Paper, 2015 Asia–Pacific Best Paper Award, and the best paper awards at IEEE GLOBECOM 2006 and IEEE/CIC ICCC 2018. He has been named as a Highly Cited Researcher by Clarivate in 2019-2023 and an AI 2000 Most Influential Scholar (Top 30 in Internet of Things) in 2023-2024. He was an IEEE Distinguished Lecturer of both the IEEE Communications Society and the IEEE Vehicular Technology Society. He is a member of the Engineering Panel of Hong Kong Research Grants Council (RGC) and a RGC Research Fellow (2021 Class). He received the Outstanding Teaching Award from Yonsei University, South Korea, in 2011. He is an Area Editor of IEEE Transactions on Wireless Communications, IEEE Transactions on Machine Learning in Communications and Networking, and IEEE Transactions on Green Communications and Networking. Previously, he served on the Editorial Boards for IEEE Journal on Selected Areas in Communications (JSAC) and IEEE Wireless Communication Letters. He has guest edited special issues of IEEE JSAC, IEEE Journal of Selected Areas in Signal Processing, and IEEE Communications Magazine, and IEEE Network. He served as the Lead Chair for the Wireless Communications Symposium of IEEE Globecom 2017 and the Communication Theory Symposium of IEEE GLOBECOM 2023 and 2014, and the TPC Co-chair for IEEE PIMRC 2017 and IEEE CTW 2023 and 2013. He is the founding President of the HKU chapter of National Academy of Inventors. 

\end{IEEEbiography}

\end{document}